\documentclass[final,twocolumn]{IEEEtran}


\IEEEoverridecommandlockouts

\usepackage{amsmath,amssymb,amsfonts}
\usepackage{diagbox} 
\usepackage{multirow} 

\usepackage{subcaption}

\usepackage{graphicx}
\usepackage{textcomp}
\usepackage{xcolor}

\usepackage{algorithm}
\usepackage{algpseudocode}
\algnewcommand{\IfThenElse}[3]{
	\State \algorithmicif\ #1\ \algorithmicthen\ #2\ \algorithmicelse\ #3}

\usepackage{braket,amsthm,enumitem}
\usepackage[hidelinks]{hyperref}
\usepackage{tikz}

\usetikzlibrary{quantikz}

\makeatletter   
\setlength{\@fptop}{0pt}
\makeatother

\newtheorem{definitionenv}{Definition}
\newtheorem{lemmaenv}[definitionenv]{Lemma}
\newtheorem{theoremenv}[definitionenv]{Theorem}
\newtheorem{corollaryenv}[definitionenv]{Corollary}
\newtheorem{propositionenv}[definitionenv]{Proposition}
\newtheorem{remarkenv}[definitionenv]{Remark}
\newtheorem{conjectureenv}[definitionenv]{Conjecture}
\newtheorem{exampleenv}{Example}
\newtheorem{app-lemmaenv}[section]{Lemma}

\newenvironment{definition}{\begin{definitionenv}\rm}{\end{definitionenv}}
\newenvironment{lemma}{\begin{lemmaenv}\rm}{\end{lemmaenv}}
\newenvironment{theorem}{\begin{theoremenv}\rm}{\end{theoremenv}}

\newenvironment{app-lemma}{\begin{app-lemmaenv}\rm}{\end{app-lemmaenv}}

\newcommand{\cC}{{\cal C}}
\newcommand{\cG}{{\cal G}}

\newcommand{\cS}{{\cal S}}

\newcommand{\cE}{{\cal E}}

\newcommand{\cL}{{\cal L}}

\newcommand{\UF}{\textrm{UNION-FIND}}
\newcommand{\SV}{\textrm{SYND\_VAL}}

\usepackage{accents} 		

\makeatletter 
\renewcommand*\env@matrix[1][*\c@MaxMatrixCols c]{%
	\hskip -\arraycolsep
	\let\@ifnextchar\new@ifnextchar
	\array{#1}}
\makeatother

\usepackage{xpatch}
\xpretocmd{\eqref}{Eq.~}{}{}


\usepackage[colorinlistoftodos,prependcaption]{todonotes}
\usepackage{xcolor}
\usepackage{xargs}

\newcommandx{\rednote}[2][1=]{\todo[inline,linecolor=red,backgroundcolor=red!25,bordercolor=red,#1]{#2}}
\newcommandx{\yellownote}[2][1=]{\todo[inline,linecolor=yellow,backgroundcolor=yellow!25,bordercolor=yellow,#1]{#2}}

\def\BibTeX{{\rm B\kern-.05em{\sc i\kern-.025em b}\kern-.08em
		T\kern-.1667em\lower.7ex\hbox{E}\kern-.125emX}}
\begin{document}
	
	\title{ 
		 Union-Intersection Union-Find for Decoding Depolarizing Errors in Topological Codes
	}

	\author{ Tzu-Hao Lin   and   Ching-Yi Lai
		\thanks{\footnotesize
			CYL   was supported by the National Science and Technology Council  in
			Taiwan, under Grant 113-2221-E-A49-114-MY3,  Grant 113-2119-M-A49-008-, and Grant 114-2119-M-A49-006-.

The authors are with the Institute of Communications Engineering, National Yang Ming Chiao Tung University, Hsinchu	300093, Taiwan. (email: cylai@nycu.edu.tw) 
	}
	}

	\maketitle
	
	\thispagestyle{plain}
	\pagestyle{plain}

	\begin{abstract} 
		In this paper, we introduce the Union-Intersection Union-Find (UIUF) algorithm for decoding depolarizing errors in topological codes, combining the strengths of iterative and standard Union-Find (UF) decoding.
		While iterative UF improves performance at moderate error rates, it lacks an error correction guarantee.
		To address this, we develop UIUF, which maintains the enhanced performance of iterative UF while ensuring error correction up to half the code distance. 
	 Through simulations under code capacity, phenomenological, and biased noise models, we show that UIUF significantly outperforms UF, reducing the logical error rate by over an order of magnitude (at around $10^{-5}$).
		Moreover, UIUF achieves lower logical error rates than the Minimum Weight Perfect Matching (MWPM) decoder on rotated surface codes under both the code capacity and phenomenological noise models,   while preserving efficient linear-time complexity.

	\end{abstract}

	 \begin{IEEEkeywords}
Quantum information, Quantum error correction codes, topological codes, surface codes, Union-Find decoder, $X/Z$ correlation.
\end{IEEEkeywords}

	\section{Introduction}
	
	Quantum computers have demonstrated significant advantages over classical computers in specific areas, such as Shor's factoring algorithm~\cite{Shor94} and Grover's search algorithm~\cite{Gro96}. However, the practical application of these algorithms remains challenging due to the fragility and susceptibility to errors of quantum systems. Quantum error correction (QEC) is fundamental to ensuring reliable computation in the presence of noise, which is a key challenge for realizing scalable quantum computers~\cite{Shor95,CS96,Steane96}.
 	Among QEC codes, Calderbank-Shor-Steane (CSS) type  stabilizer codes~\cite{CS96,Steane96,GotPhD} are particularly valuable as they leverage classical syndrome decoding techniques to correct quantum errors. Topological codes~\cite{Kit97_03,BK98,BM07}, such as toric and surface codes,  have gained prominence as robust solutions due to their local error correction properties and well-established fault-tolerant computation schemes~\cite{Fow+12}.

Toric and surface codes can be decoded using the Minimum Weight Perfect Matching (MWPM) decoder~\cite{DKLP02} and the Union-Find (UF) decoder~\cite{DZ20,DN21}. MWPM achieves high thresholds but is computationally intensive, with a complexity of $O(n^3)$ based on Edmonds' Blossom algorithm~\cite{Edm65}, where $n$ denotes the number of error variables (or physical qubits). In contrast, the UF decoder, regarded as an approximate implementation of MWPM~\cite{WLZ22}, provides strong decoding performance with near-linear complexity of $O(n\alpha(n))$, where $\alpha(n)$ is the inverse Ackermann function~\cite{TL84}.
Moreover, recent work has shown that UF achieves linear-time worst-case complexity~\cite{GB24}.  UF can also be extended to graphs with weighted edges to incorporate noise probabilities~\cite{HNB20,WLZ22} and is applicable to circuit-level noise.
Further improvements in UF decoding performance have been demonstrated by integrating a local predecoder based on neural networks~\cite{MPT22}.
Recently, several works have focused on reducing the computational complexity of MWPM and UF decoders, including Fusion Blossom~\cite{WZ23}, Sparse Blossom~\cite{HG25}, and Bubble Clustering~\cite{FVC25}.

As CSS codes, toric and surface codes allow for separate decoding of bit-flip ($X$) and phase-flip ($Z$) errors, which is effective under an independent $X/Z$ error model. However, for general depolarizing errors, where $Y$ errors emerge as correlated combinations of $X$ and $Z$ errors, standard MWPM and UF decoders fail to capture these $X/Z$ correlations, often leading to suboptimal decoding performance~\cite{DT14}.
In this paper, we address the problem of decoding depolarizing errors in the code capacity noise model with perfect syndrome measurements and the phenomenological noise model with measurement errors.

Belief Propagation (BP)~\cite{Gal62,Pea88}-based decoders with quaternary alphabets inherently utilize the error probability distributions~\cite{KL20,KL22}.
Postprocessing techniques, such as ordered statistics decoding~\cite{PK19,RWBC20,KKL23}, can enhance BP decoding performance for topological codes but incur a  computational cost of $O(n^3)$.
Moreover,  these methods lack a distance guarantee on their error correction capability, especially in the low-error-rate regime where
achieving a logical error rate below $10^{-10}$ is crucial for fault-tolerant quantum computation.

 The MWPM decoder can be adapted for correlated errors by reweighting the edges of decoding graphs~\cite{Fow13,DT14,PF23}.
 It has been shown that reweighting the decoding graph $G_Z$ based on the decoded output from $G_X$ can improve the performance of MWPM decoding~\cite{Fow13,DT14}.
Paler and Fowler further proposed a pipelined approach with simplified reweighting to enhance the efficiency of  the reweighted  MWPM decoding~\cite{PF23}.
 Yuan and Lu~\cite{YL22} proposed an iterative refinement of the reweighting method introduced in~\cite{Fow13,DT14} and showed that the weight of the output correction does not increase with further iterations.
 Moreover, iterated MWPM has been shown to benefit from graph reweighting for asymmetric errors~\cite{iOMFC23}.

In this paper, we extend this line of research to the UF decoder, which is a promising candidate for near-term quantum devices~\cite{25, 26, 27, 28, HNB20}.
 We begin by discussing how $X/Z$ correlations can be leveraged in decoding strategies that treat $X$ and $Z$ errors separately. We then propose an iterative version of the UF decoder, referred to as IRUF, where the outputs from the $X$ and $Z$ UF decodings are used as erasures to iteratively refine subsequent error correction. This approach is based on the observation that qubits identified as erasures are more likely to contain correlated $Y$ errors, thereby enhancing performance in high-error regimes by improving the likelihood of accurate error correction.

However, we note a key limitation of this iterative decoder: unlike the standard UF decoder~\cite{DN21}, it does not guarantee the correction of errors with weights up to half the code distance. This limitation arises from the degenerate nature of quantum codes, where the decoder may output a degenerate error equivalent to the actual error. Such errors cannot reliably serve as erasures for subsequent decoding rounds.

 To address this issue, we introduce the Union-Intersection Union-Find (UIUF) decoding algorithm. In the UF algorithm, an erasure cluster is first expanded to encompass a valid error matching the given syndrome, followed by a peeling decoder. 
 To improve performance, UIUF refines the erasure selection by taking the intersection of the $X$ and $Z$ clusters as the erasure set for two subsequent UF decodings of $X$ and $Z$ errors. This approach ensures that only reliable erasures are used for correction. Furthermore, we prove that UIUF guarantees error correction for errors up to half the code distance. Notably, the time complexity of UIUF remains $O(n)$.

Through numerical simulations,    we demonstrate that  UIUF   outperforms   UF   across a variety of topological codes, including toric, rotated toric, surface, and rotated surface codes in terms of logical error rate performance. 
Simulations are performed under the code capacity, phenomenological, and biased noise models.
Specifically, at a logical error rate of approximately $10^{-5}$ for UF on toric or surface codes around $d=10$, UIUF improves performance by over an order of magnitude, with even greater gains at higher distances.
 This is due to the ability of UIUF to correct more low-weight errors by exploiting $X/Z$ correlations.

For threshold analysis, UIUF with  weighted growth achieves thresholds of 15.5\%–15.6\% under the code capacity noise model. This matches the threshold of 15.5\% reported for MWPM in~\cite{WFSH10}. 
In the phenomenological noise model, UIUF improves upon UF by increasing the threshold value from 3.40\% to 3.47\% for rotated surface codes. Although UIUF’s threshold in the phenomenological noise model (3.47\%) is lower than MWPM’s threshold of 3.92\%, we demonstrate that UIUF achieves lower logical error rates than MWPM in both the code capacity and phenomenological noise models.

	 This paper is organized as follows.
	 In the next section, we introduce the background knowledge of quantum codes, including topological codes, and outlines the basic algorithms for the UF decoder. In Section~\ref{sec:IRUF}, we discuss $X/Z$ error correlations and present the IRUF decoder. Then we  introduced the UIUF decoder and provide a proof of its decoding distance guarantee in Section~\ref{sec:UIUF}. Simulation results are presented in Section~\ref{sec:sim}. Finally, we conclude with a discussion of potential future research directions.

	\section{ Quantum stabilizer codes and related decoding problems} \label{sec:basics}
 
  We consider the fundamental unit of quantum information encoded in qubits, with the computational basis ${\ket{0}, \ket{1}} \in \mathbb{C}^2$.
  
  Let the single-qubit Pauli operators be defined as follows:
   $I \triangleq \begin{bmatrix} 1 & 0 \\ 0 & 1 \end{bmatrix}$, $X \triangleq \begin{bmatrix} 0 & 1 \\ 1 & 0 \end{bmatrix}$, $Y \triangleq \begin{bmatrix} 0 & -i \\ i & 0 \end{bmatrix}$, $Z \triangleq \begin{bmatrix} 1 & 0 \\ 0 & -1 \end{bmatrix}$. Note that $X$, $Y$, and $Z$ anticommute with each other.
  Denote the $n$-fold Pauli group by $\langle \mathcal{G}_n, \cdot \rangle  \triangleq \{ c P_1 \otimes P_2 \otimes ... \otimes P_n \vert c \in \{ \pm 1, \pm i \}, P_j \in \{ I, X, Y, Z \} \}$. 

		  Denote $X_i$ as a Pauli operator that applies  $X$ to the $i$-th qubit while acting as the identity on all other qubits.
		  $Y_i$ and $Z_i$ are  similarly defined. 
		  The weight of an $n$-fold Pauli operator is the number of qubits on which it acts non-trivially (i.e., with a non-identity Pauli component),
		  	 while its support is the set of qubit positions where these nontrivial operations occur.

Let $\cS$ be an abelian subgroup of $\cG_n$ that does not contain the minus identity. 
 	A stabilizer code $C(\cS)$  is defined as the subspace  stabilized by the stabilizer group $\cS$:    	 
$$		C(\cS) = \{ \ket{\psi}\in \mathbb{C}^{2^n}: g  \ket{\psi} =\ket{\psi}, \forall g \in \cS \}.$$
The code   encodes $k$ logical qubits into $n$ physical qubits 
if    $\cS$ has $n-k$ independent generators \cite{GotPhD}.
	The elements of $\cS$ are referred to as  stabilizers. 
	
	A Pauli error $E \in \mathcal{G}_n$ acting on a codeword in $C(\mathcal{S})$ can be detected by measuring the stabilizer generators and checking the commutation relations between $E$ and the stabilizers. 
	Let  $N(\cS) = \{ P\in\cG_n\vert  P\cS = \cS P \}$ denote the normalizer group of $\cS$,
	which contains Pauli operators that commute with $\cS$. 
	The code has  minimum distance $d$ if any Pauli operator in $N(\cS)\setminus \mathcal{S}$, with weight fewer than $d$, is detectable. This implies that any error with weight up to $\lfloor \frac{d - 1}{2} \rfloor$ is correctable.
	The stabilizer code $C(\mathcal{S})$ is referred to as an [[$n$, $k$, $d$]] code.
 
The error syndrome $\sigma \in \{0, 1\}^{m}$ of an error $E$, where $m \geq n-k$, is defined by its commutation relations with a set of $m$ stabilizers $g_1, \dots, g_m$. Specifically, $\sigma_i = 1$ if $E$ anticommutes with $g_i$, and $\sigma_i = 0$ otherwise.

 \subsection{Topological Codes}
 
 We consider \textit{Calderbank-Shor-Steane} (CSS) codes~\cite{CS96,Steane96}, which have stabilizer generators composed entirely of either $X$ or $Z$ operators.
Specifically, topological codes form a class of CSS stabilizer codes characterized by a geometric arrangement of qubits and local stabilizers for error syndrome measurements~\cite{Kit03,BK98,FM01,DKLP02,RH07,BM07}. 
 In this article, we focus on surface codes and toric codes defined on two-dimensional square lattices, which are among the leading candidates for quantum computing architectures.

   Kitaev's $[[d^2 + (d-1)^2, 1, d]]$ surface codes~\cite{Kit03,BK98} are illustrated in Figure\,\ref{fig:surface_d4}~(a),    showing the layout graph for a code with distance $d = 4$. 
 The circles indicate the locations of data qubits, which correspond to potential error variables and are referred to as \textit{variable nodes}. 	
 
 Error syndromes in surface or toric codes are obtained by measuring each local $X$ or $Z$ stabilizer, which typically involves the surrounding two, three, or four data qubits, depending on the code's geometry.
 These stabilizer measurements are depicted by red or blue  boxes in the layout graph and are referred to as \textit{$X$- or $Z$-check nodes}, respectively.
 
 Toric codes resemble surface codes but are defined on a torus without boundaries. The rotated versions of toric and surface codes offer an alternative construction method that reduces overhead by rotating the original lattice by 45 degrees~\cite{BM07}. The layout graph of rotated toric codes with $d=4$ is shown in Figure\,\ref{fig:surface_d4}~(b). 
  The parameters of 
 toric and surface codes are   summarized in Table\,\ref{tb:2d_topological}.

 	\begin{figure}[h] 
 	\centering
 	\begin{subfigure} {0.23\textwidth}
 		\centering
 		\includegraphics[height=!,width=0.99\columnwidth, keepaspectratio=true]{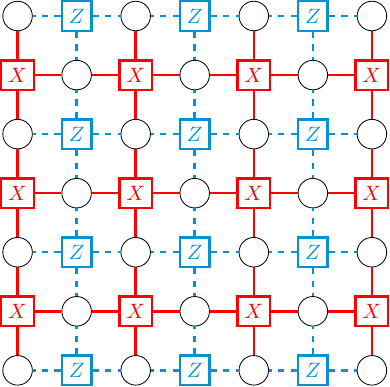}
 		\caption{surface code}
 	\end{subfigure}\hfill\begin{subfigure} {0.23\textwidth}
 		\includegraphics[height=!,width=0.99\columnwidth, keepaspectratio=true]{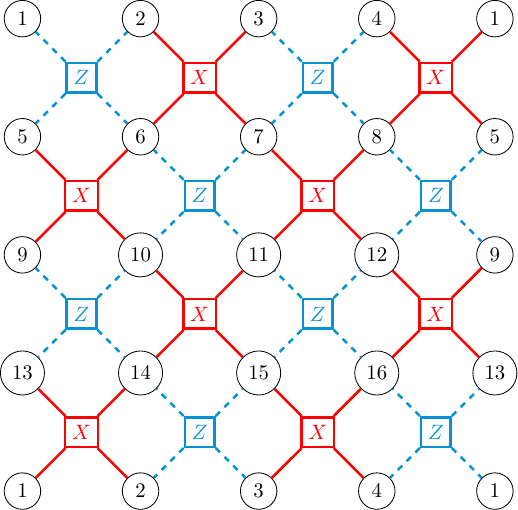}
 		\caption{rotated toric code}
 	\end{subfigure}
 	\caption{(a) Layout graph of the   surface code for $d = 4$.
 		(b) Layout graph of the   rotated toric code for $d = 4$. 
 		Circles with the same label represent the same qubit.}
 	\label{fig:surface_d4}
 \end{figure}

 \begin{table}
 	\centering
 	\footnotesize
 	\begin{tabular}{|c|c|c|}
 		\hline
 		code family& $[[n,k,d]]$& note\\
 		\hline
 		toric codes&  [[$2d^2$, 2, $d$]]& \\
 		rotated toric codes& $ [[d^2,2,d]]$& $d$ even\\
 		surface codes&$[[d^2 + (d-1)^2, 1, d]]$& \\
 		rotated surface codes &$ [[d^2,1,d]]$ & $d$ odd  \\
 		\hline
 	\end{tabular}
 	\caption{Parameters of topological codes.}\label{tb:2d_topological}
 \end{table}

For a surface code, the logical $Z$ operator is a product of $Z$ operators along a boundary-to-boundary path, while the logical $X$ operator is a product of $X$ operators along an orthogonal path.
For a toric code, logical $Z$ and $X$ operators are similarly defined along paths wrapping around the two nontrivial loops on the torus.   
 The minimum weight of a logical operator corresponds to the lattice length, which defines the minimum distance of the code.

	\subsection{Code capacity  noise model}
	We first consider the code capacity noise model with perfect syndrome extraction. We assume that each qubit experiences independent depolarizing Pauli errors at a physical error rate of $\epsilon$, where a qubit undergoes an $X$, $Y$, or $Z$ error with   probability $\epsilon/3$.

	The stabilizer code decoding problem is defined as follows:
	\begin{definition}[Syndrome Decoding Problem] \label{def:decoding}
		Given a set of $m$ stabilizers  of a stabilizer group $\mathcal{S}$ and an error syndrome $\sigma \in \{0, 1\}^m$ corresponding to an error $E \in \{I, X, Y, Z\}^{\otimes n}$, the goal of syndrome decoding is to find an error estimate $\hat{E} \in \{I, X, Y, Z\}^{\otimes n}$ such that $\hat{E} E \in \mathcal{S}$, up to a global phase.
	\end{definition}
	
	The error estimate is then applied as a correction operator for error recovery.
	A logical error occurs if $\hat{E} E$ is not a stabilizer in $\mathcal{S}$ up to a global phase.

	In addition to depolarizing Pauli errors, we also consider erasure errors~\cite{GBP97}, where a qubit is erased and replaced by a state $\ket{e}$, which is orthogonal to both $\ket{0}$ and $\ket{1}$. 
	We can assume that an erased qubit suffers uniform Pauli errors $I,X,Y,Z$ with equal probability.
	It is assumed that erasure events are detectable, and the location of the erased qubit is known.
	The erasure decoding problem is described as follows.
	\begin{definition}[Erasure Decoding Problem] \label{def:erasure_decoding}
		Given a set of $m$ stabilizers of a stabilizer group $\mathcal{S}$, an erasure set ${\cal E}\subset\{1,\dots,n\}$, and an error syndrome $\sigma \in \{0, 1\}^m$ corresponding to an error $E \in \{I, X, Y, Z\}^{\otimes n}$
		with support contained in $\cE$, the goal of erasure decoding is to find an error estimate $\hat{E} \in \{I, X, Y, Z\}^{\otimes n}$ 	with support contained in $\cE$ such that $\hat{E} E \in \mathcal{S}$, up to a global phase.
	\end{definition}

	It is also possible to address a mixed decoding problem where some qubits experience erasures, while others are subject to depolarizing errors.

	For CSS codes, the error syndromes of $X$ and $Z$ stabilizers can be processed separately to decode $Z$ and $X$ errors. Each variable node in the layout graph of a toric or surface code  represents potential errors on the data qubits and connects to at most two $X$-check nodes 
	and  two $Z$-check nodes.
	Therefore, the layout graph can be deomposed into two \textit{decoding graphs}, $G_X$ and $G_Z$.  The decoding graph $G_X$ is constructed with a vertex set of $X$-check nodes, where two $X$-check nodes are connected if they share a common variable node in the layout graph. The decoding graph $G_Z$ is defined similarly.
	
	The decoding graphs for the surface code and rotated toric code in Figure\,\ref{fig:surface_d4} are shown in Figure\,\ref{fig:decoding_d4}.
	These graphs together resemble the original layout graphs, with intersections of solid and dashed lines corresponding to data qubit locations.

For surface codes, variable nodes connected to only one check node in the layout graph are appended with virtual check nodes, as indicated by dotted boxes   in  Figure\,\ref{fig:decoding_d4}~(a).

	\begin{figure}[h] 
	\centering
			\begin{subfigure}{0.23\textwidth}
				\centering
		\includegraphics[height=!,width=0.99\columnwidth, keepaspectratio=true]{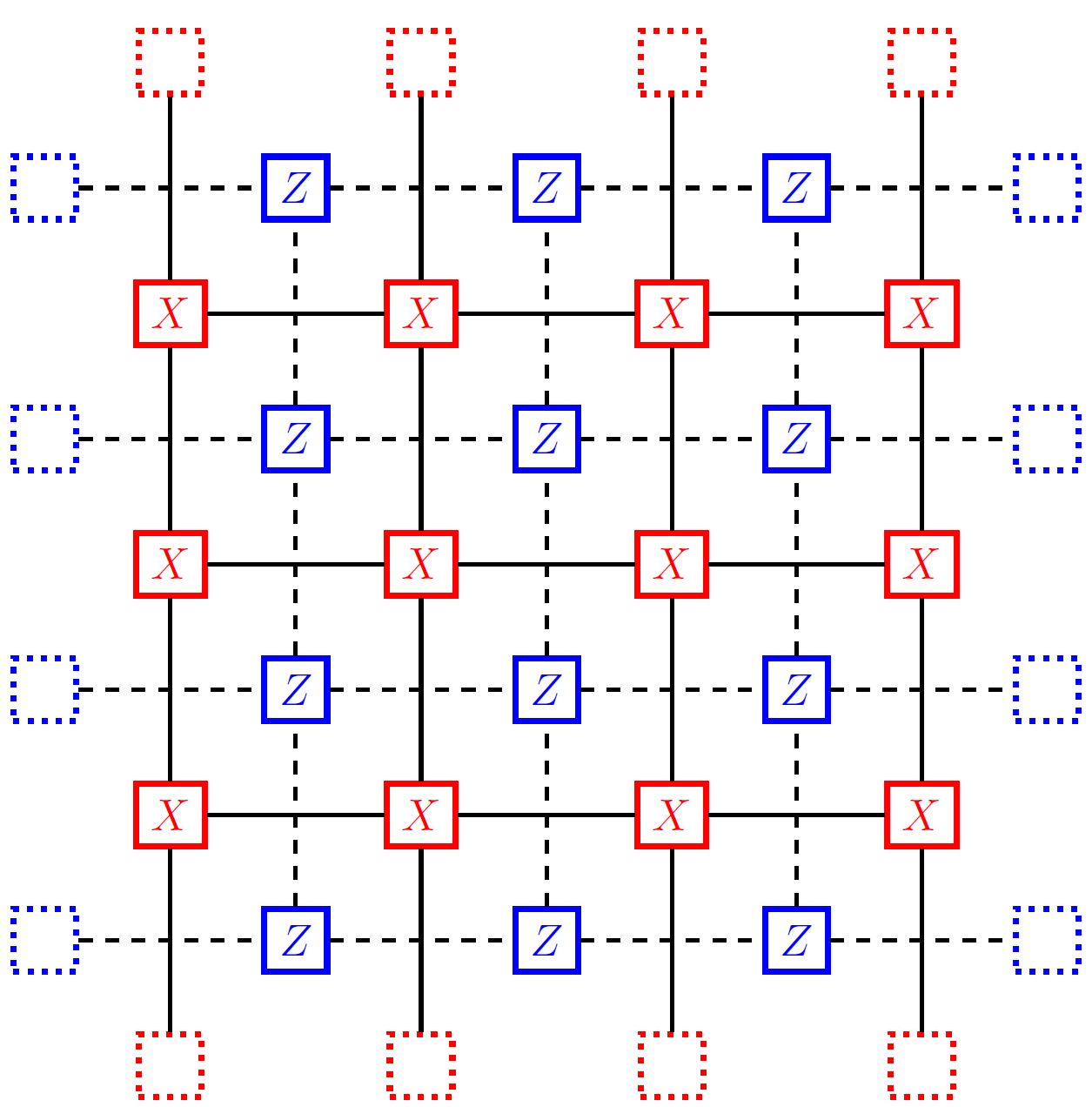}
		\caption{decoding graphs for the $[[25,1,4]]$ surface code}
	\end{subfigure}\hfill
	\begin{subfigure}{0.23\textwidth}
		\centering
		\includegraphics[height=!,width=0.99\columnwidth, keepaspectratio=true]{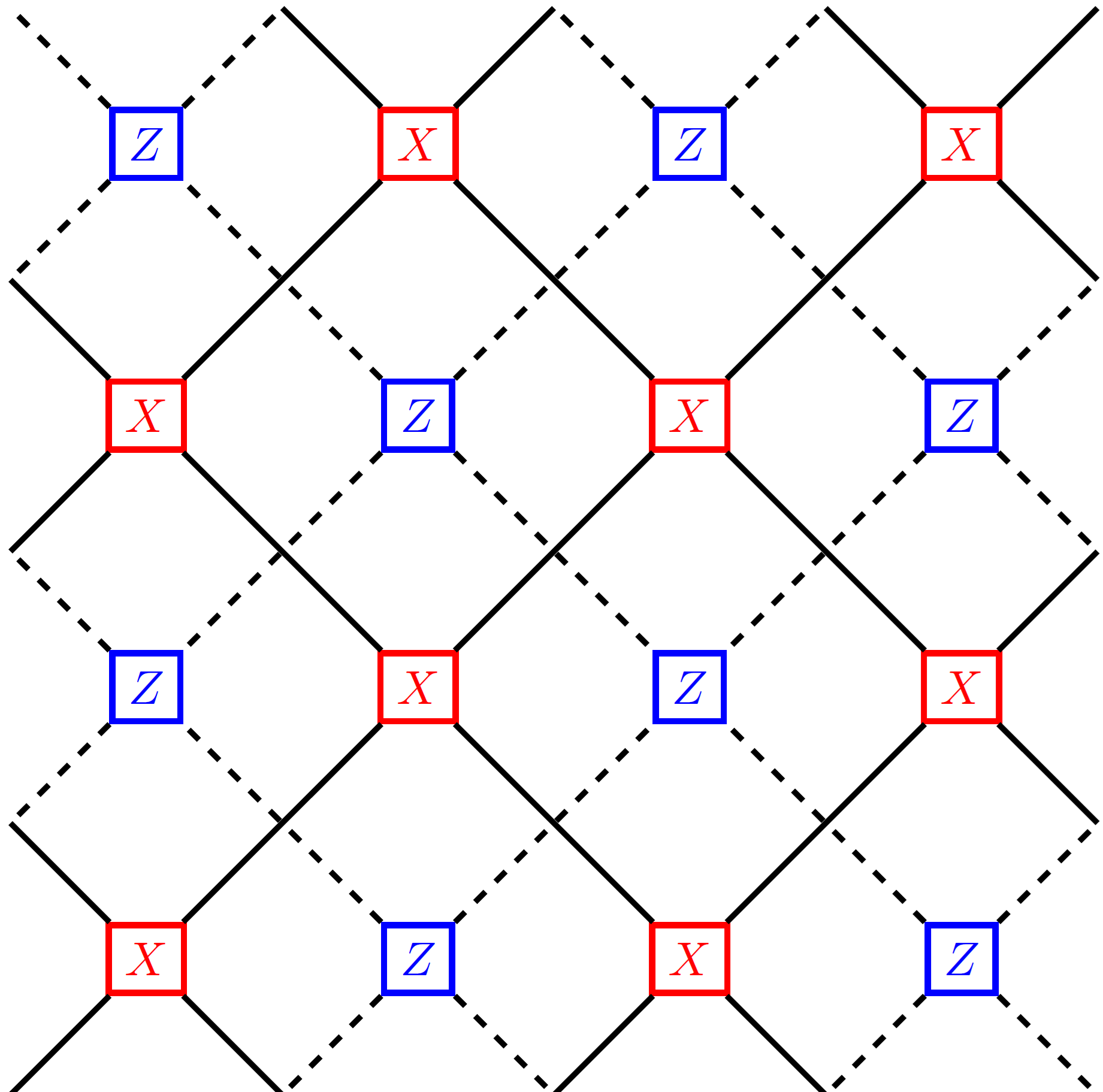}
		\caption{decoding graphs for the $[[16,2,4]] $ rotated toric code}
	\end{subfigure}
	\caption {The decoding graph $G_X$ consists of the red check nodes and solid lines, whereas the decoding graph $G_Z$ consists of the blue check nodes and dashed lines.}
	\label{fig:decoding_d4}
\end{figure}

  \subsection{Peeling decoder}

Delfosse and Z\'emor  proposed a  Peeling decoder  for surface and toric codes that corrects erasure errors~\cite{DZ20}.
\begin{lemma}[Peeling Decoder \cite{DZ20}]  \label{lemma:peeling}
	There exists a linear-time   decoder capable of correcting up to 
	$r$ erasure errors on  a toric or surface   code of distance $d$, provided   $r<d$.

\end{lemma}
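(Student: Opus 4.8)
The plan is to realize the peeling decoder as a graph-theoretic procedure on the decoding graphs $G_X$ and $G_Z$ and then to close with a weight argument driven by the minimum distance. Since the code is CSS, I would decode the $Z$-part of the erased error on $G_X$ and the $X$-part on $G_Z$ independently. In each such graph the erased qubits form a set of edges, an error restricted to the erasure is a subset of those edges, and the syndrome is exactly the set of check vertices incident to an odd number of error edges, i.e. the $\mathbb{F}_2$-boundary of that edge set. It therefore suffices to exhibit, for a single decoding graph, a linear-time routine that returns some edge subset supported on the erasure whose boundary equals the given $\sigma$, and then to argue that this subset is a correct correction whenever $r<d$.

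First I would compute a spanning forest $F$ of the subgraph induced by the erased edges; this takes linear time via breadth-first search or a union--find sweep, and I would assign ``no error'' to every erased edge outside $F$. The core step is the peel itself: because a forest has no cycles, its boundary map is injective, so whenever a matching edge subset exists it is unique on $F$ and can be recovered greedily. Concretely, I would repeatedly select a leaf vertex $v$ of $F$ together with its unique incident forest edge $e$; the status of $e$ is forced by whether $v$ currently carries a nontrivial syndrome bit, after which I flip the syndrome at the other endpoint of $e$ and delete $e$ and $v$. Each edge is processed once, so the peel is linear.

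To see that this returns a valid correction, I would verify that discarding the non-forest edges never obstructs syndrome matching. The key observation is that the true error $E$ lies entirely within the erasure, so $\sigma$ is already the boundary of an edge subset of the erased subgraph; within each connected component this forces an even-parity condition on the check vertices, with the virtual boundary checks of the surface code free to absorb any residual parity. Since a spanning tree realizes every admissible boundary, the peeling always terminates with an output $\hat{E}$ that is supported on $\mathcal{E}$ and reproduces $\sigma$.

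Finally I would close with the distance argument, which is where the hypothesis $r<d$ enters. Both $\hat{E}$ and $E$ match $\sigma$ and are supported on $\mathcal{E}$, so $\hat{E}E$ lies in $N(\mathcal{S})$ and has weight at most $r$. As every element of $N(\mathcal{S})\setminus\mathcal{S}$ has weight at least $d$, the strict inequality $r<d$ forces $\hat{E}E\in\mathcal{S}$, i.e. the correction succeeds. I expect the main obstacle to be the syndrome-matching claim of the third paragraph rather than the final counting step: one must argue carefully that restricting to a spanning forest preserves solvability, handling the per-component even-parity condition and, for surface codes, the bookkeeping of the virtual boundary checks that absorb odd parity. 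Combining the linear-time forest construction with the linear-time peel then yields the claimed linear overall complexity.
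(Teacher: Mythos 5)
Your proposal is correct and follows essentially the same route as the paper (and the underlying argument of Delfosse--Z\'emor): a linear-time spanning-forest construction on the erased subgraph of each decoding graph, a leaf-peeling pass whose output is forced because the boundary map on a forest is injective, and a final minimum-distance count showing that $\hat{E}E$ is supported on the $r<d$ erased qubits and therefore must be a stabilizer. The paper itself only sketches the two algorithmic steps and cites \cite{DZ20} for the guarantee, so your write-up simply supplies the standard details (per-component parity, virtual boundary checks) consistent with that reference.
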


For an erasure decoding problem, the set of erasures $\cE$ corresponds to two edge subsets in the decoding graphs $G_X$ and $G_Z$, effectively transforming the problem into two binary erasure decoding problems.
Recall that the intersection of a solid line in $G_X$ and a dashed line in $G_Z$ represents the location of a data qubit. When this qubit is erased, the corresponding edges in  $G_X$ and $G_Z$ will be marked as erasures. 
Thus, we will refer to these edges as belonging to the erasure set $\cE$.

A  check node is said to be \textit{nontrivial} if its syndrome is nonzero.
Let $\Sigma_X$  denote the set  of nontrivial $X$-check nodes. The goal of a decoder is to find the minimum number of edges in $\cE$ that match $\Sigma_X$ in $G_X$. To achieve this, the peeling decoder performs two steps: 
\begin{enumerate}
	\item Spanning forest growth: It first constructs a spanning forest $T_X \subset G_X$, which is a maximum subset of  $\cE$ that span the nodes in $\Sigma_X$ without  cycles. This step can  be completed in linear time.
	
\item	Peeling: Next, the peeling decoder removes edges in $T_X$ that are connected to the forest at only one endpoint. This process can be performed by traversing the forest in linear time. This process identifies a correction path between the syndrome pairs within each spanning tree, corresponding to the desired $Z$ correction.
\end{enumerate}

Similarly, let $\Sigma_Z$ denote the set  of nontrivial $Z$-check nodes. A spanning forest $T_Z \subset G_Z$  can be constructed, and the peeling step is applied to obtain  the $X$ correction.

	\subsection{Union-Find decoder} \label{chapter:UF}
	
	The UF decoder can jointly decode erasures and Pauli errors by reducing the Pauli error decoding problem to an erasure decoding problem~\cite{DN21}. Once this reduction is done, the peeling decoder is used to determine the correction operator. 
	
	The first step of the UF decoder, known as \textit{syndrome validation}, groups nontrivial check nodes into clusters, where a cluster is a connected component of check nodes influenced by errors. The decoder then expands these clusters to ensure they contain errors consistent with the observed syndromes.
		More specifically, given a set of erasures $\cE$ and a set of nontrivial $X$-check nodes $\Sigma_X$ corresponding to these erasures and any additional $Z$ errors, a cluster refers to a connected component in $G_X$ that includes edges from $\cE$ and nodes from $\Sigma_X$. Initially, the clusters consist of either individual $X$-check nodes or connected edges in $\cE$ possibly accompanied by some nodes from $\Sigma_X$. A cluster is said to be valid if its edges are consistent with a potential $Z$ error that matches its nontrivial  check nodes. Specifically, a cluster is valid if it either contains an even number of nodes from $\Sigma_X$ or connects to  a virtual node on the boundary. This is because a $Z$ error in a toric or surface code can trigger either two or one $X$-check nodes.

	Invalid clusters then expand outward by one half-edge. When two or more invalid clusters become connected, they are fused into a single valid cluster. This process repeats until all remaining clusters are valid. The peeling decoder then treats these clusters as an erasure set, resolving errors based on the syndrome information.

	\begin{definition}
		Let  $\SV(G,\Sigma)$ denote the function of syndrome validation, which takes as input a decoding graph $G$ and a set of nontrivial check nodes $\Sigma$, and returns a collection of valid clusters.
	\end{definition}

	\begin{definition}
		Let $\UF(G, \cE, \Sigma)$ denote the function of the UF decoder, which takes as input a decoding graph $G$, an erasure set $\cE$, and a set of nontrivial check nodes $\Sigma$, and returns a correction set $\cC$, which indicates the qubits that need to be corrected.
	\end{definition}

		\begin{figure}[htbp]  
		\centering  
		\begin{subfigure}{0.15\textwidth}
			\centering
			\includegraphics[height=!,width=0.99\columnwidth, keepaspectratio=true]{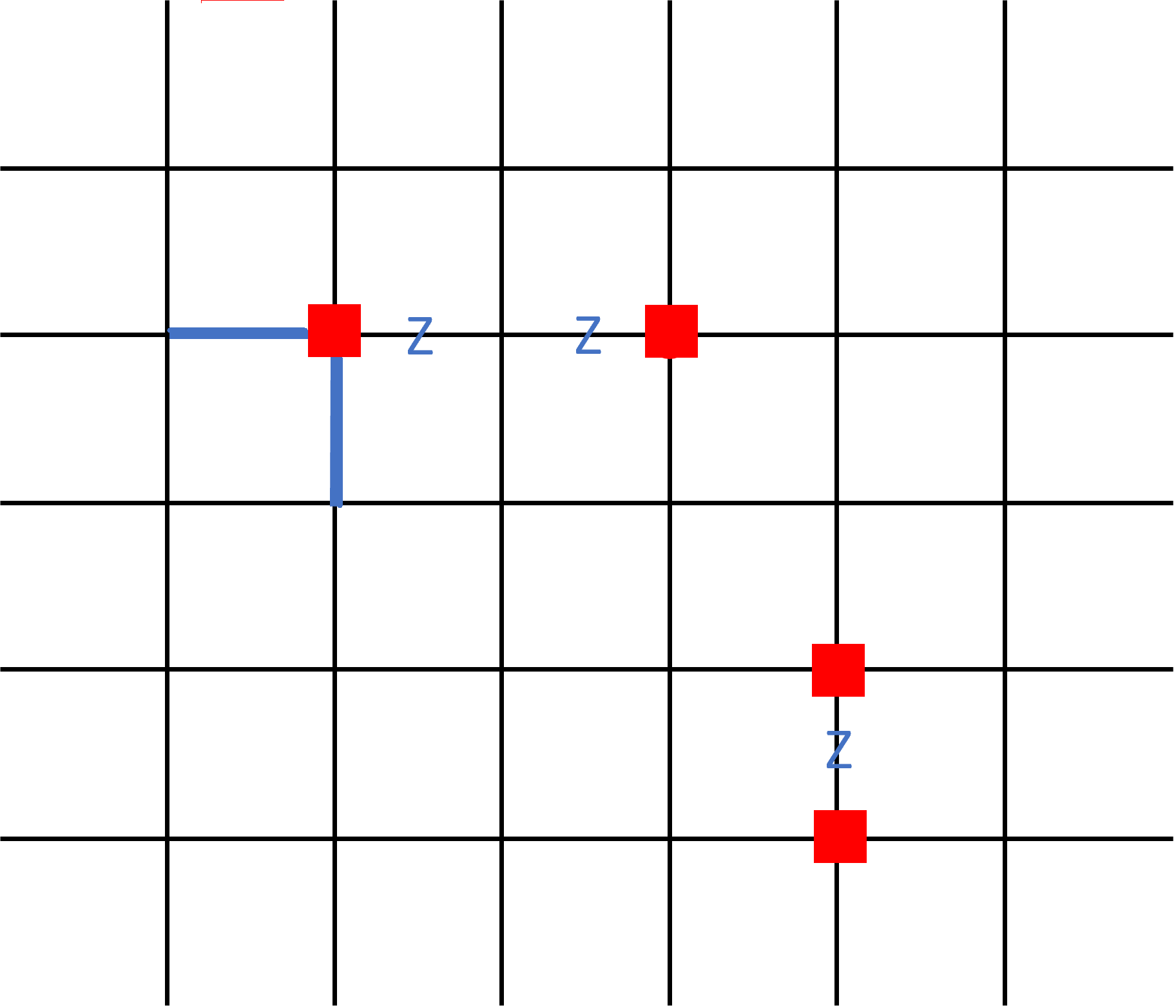} 
			\caption{\scriptsize Errors and erasures }
			\label{fig:3.6(a)}
		\end{subfigure}
	\hfil
		\begin{subfigure}{0.15\textwidth}
			\centering
			\includegraphics[height=!,width=0.99\columnwidth, keepaspectratio=true]{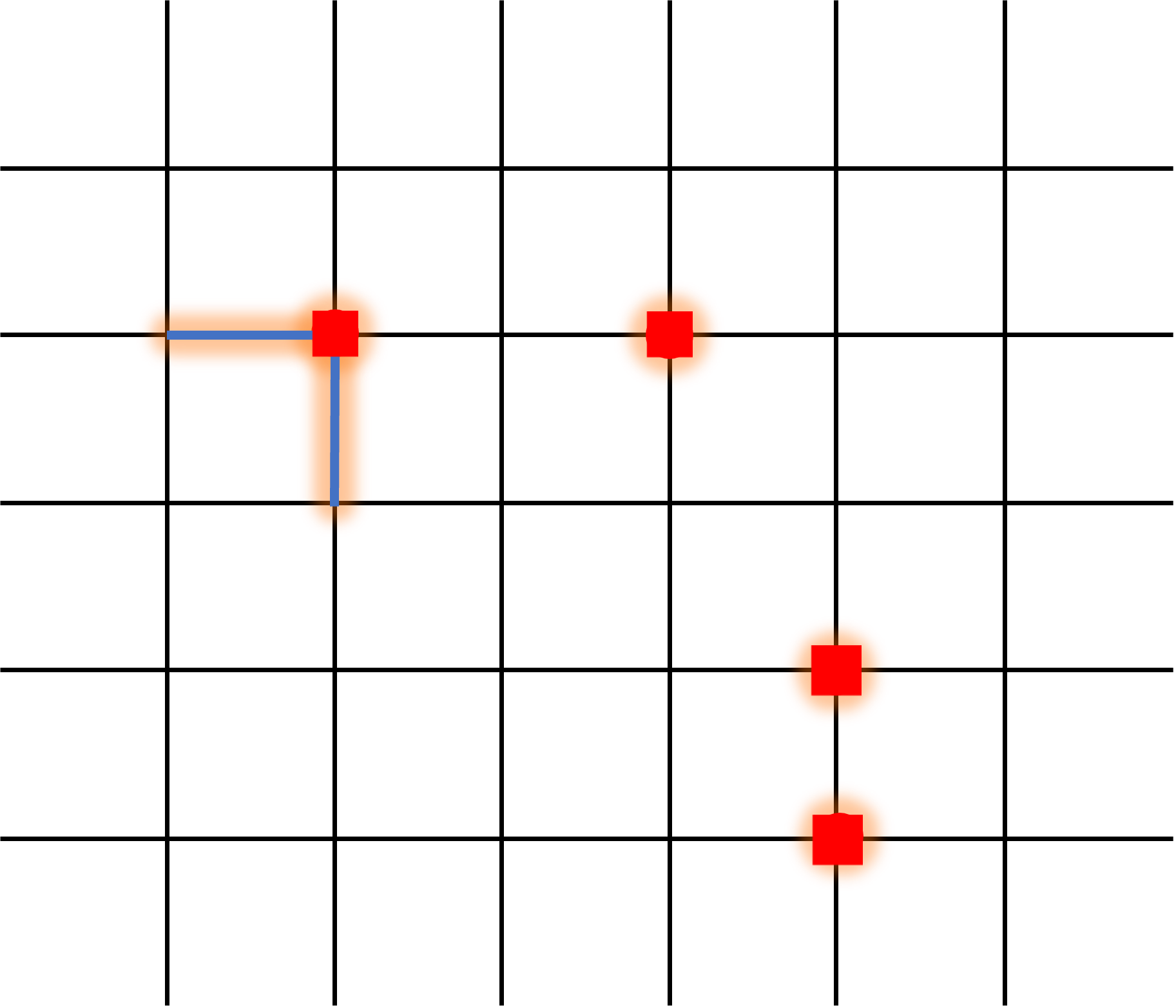} 
			\caption{\scriptsize Cluster initialization}
			\label{fig:3.6(b)}
		\end{subfigure}
	\hfil
		\begin{subfigure}{0.15\textwidth}
			\centering
			\includegraphics[height=!,width=0.99\columnwidth, keepaspectratio=true]{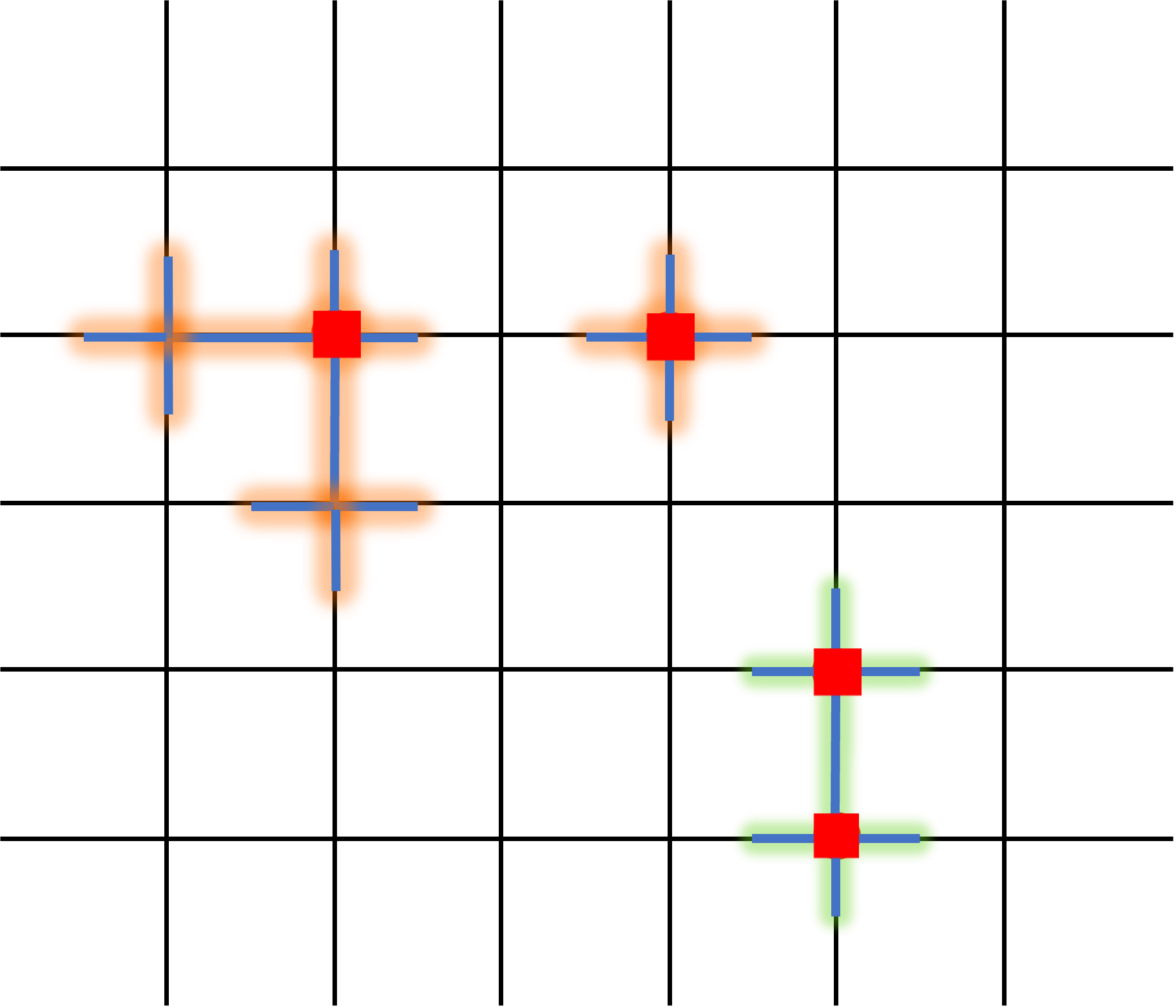} 
			\caption{\scriptsize Growth and fusion}
			\label{fig:3.6(d)}
		\end{subfigure}

		\begin{subfigure}{0.15\textwidth}
			\centering
			\includegraphics[height=!,width=0.99\columnwidth, keepaspectratio=true]{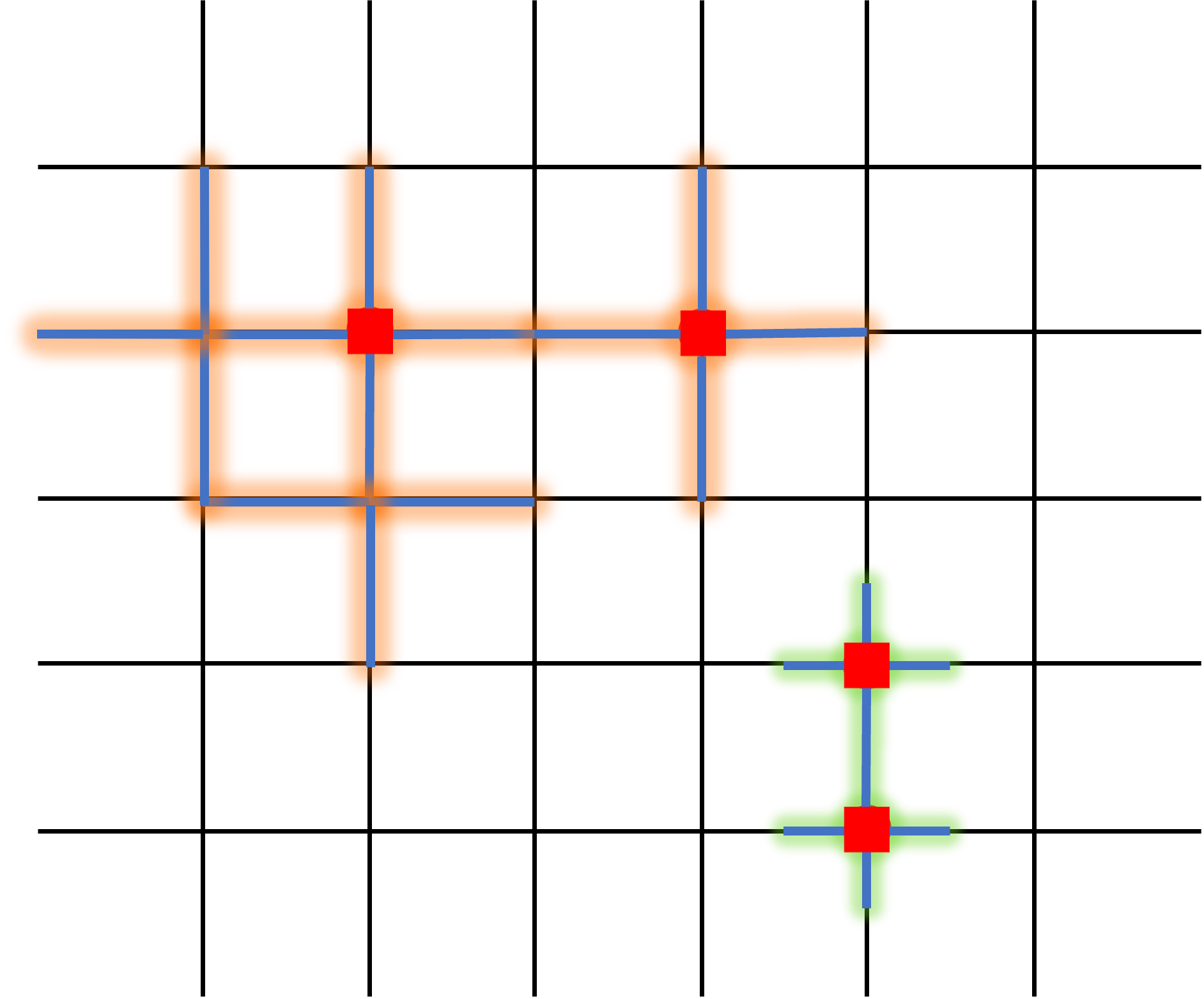} 
			\caption{\scriptsize Growth and fusion}
			\label{fig:3.6(e)}
		\end{subfigure}
 		\hfil
 		\begin{subfigure}{0.15\textwidth}
			\centering
			\includegraphics[height=!,width=0.99\columnwidth, keepaspectratio=true]{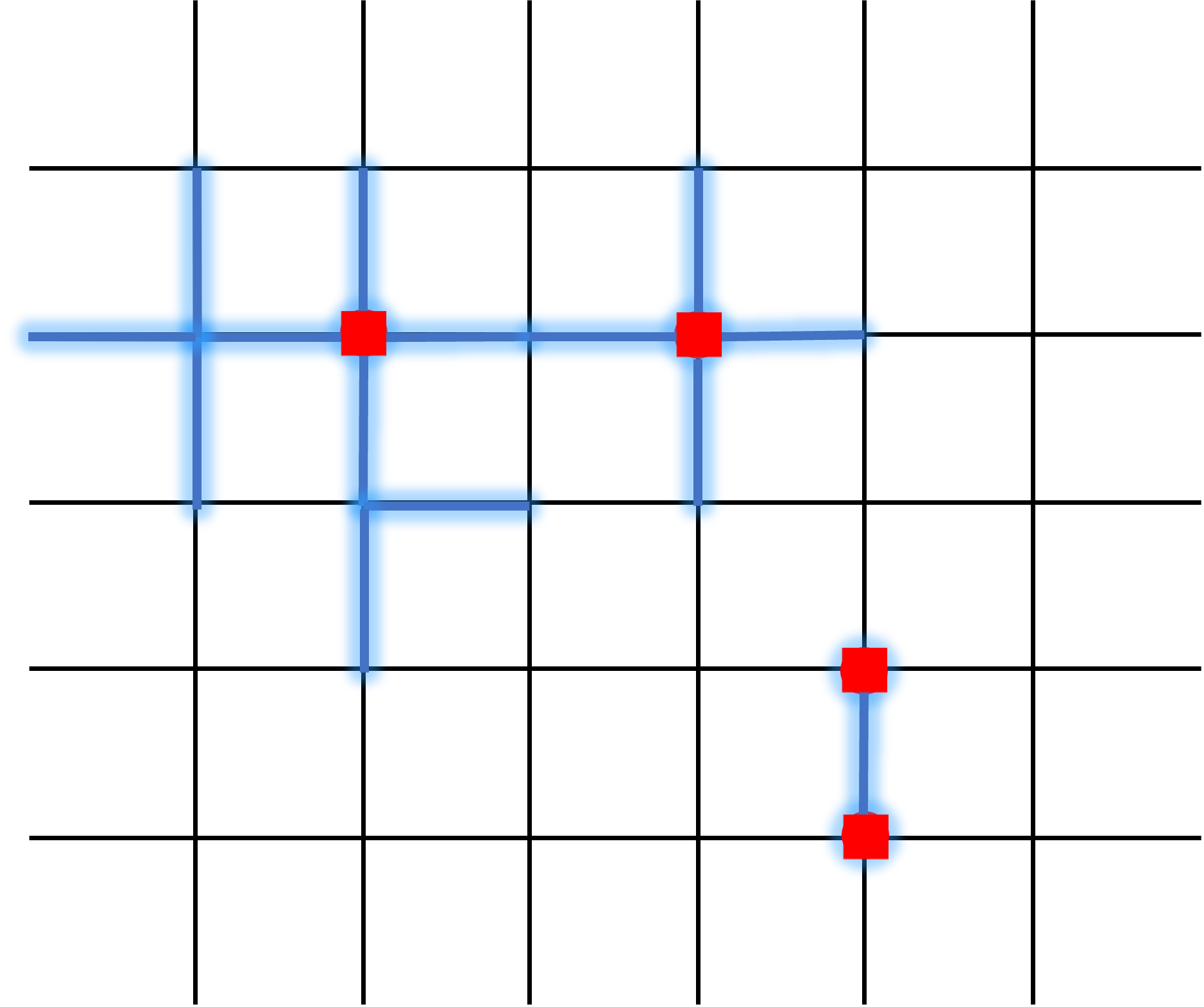} 
			\caption{\scriptsize Spanning forest}
			\label{fig:3.6(g)}
		\end{subfigure}
	\hfil
		\begin{subfigure}{0.15\textwidth}
			\centering
			\includegraphics[height=!,width=0.99\columnwidth, keepaspectratio=true]{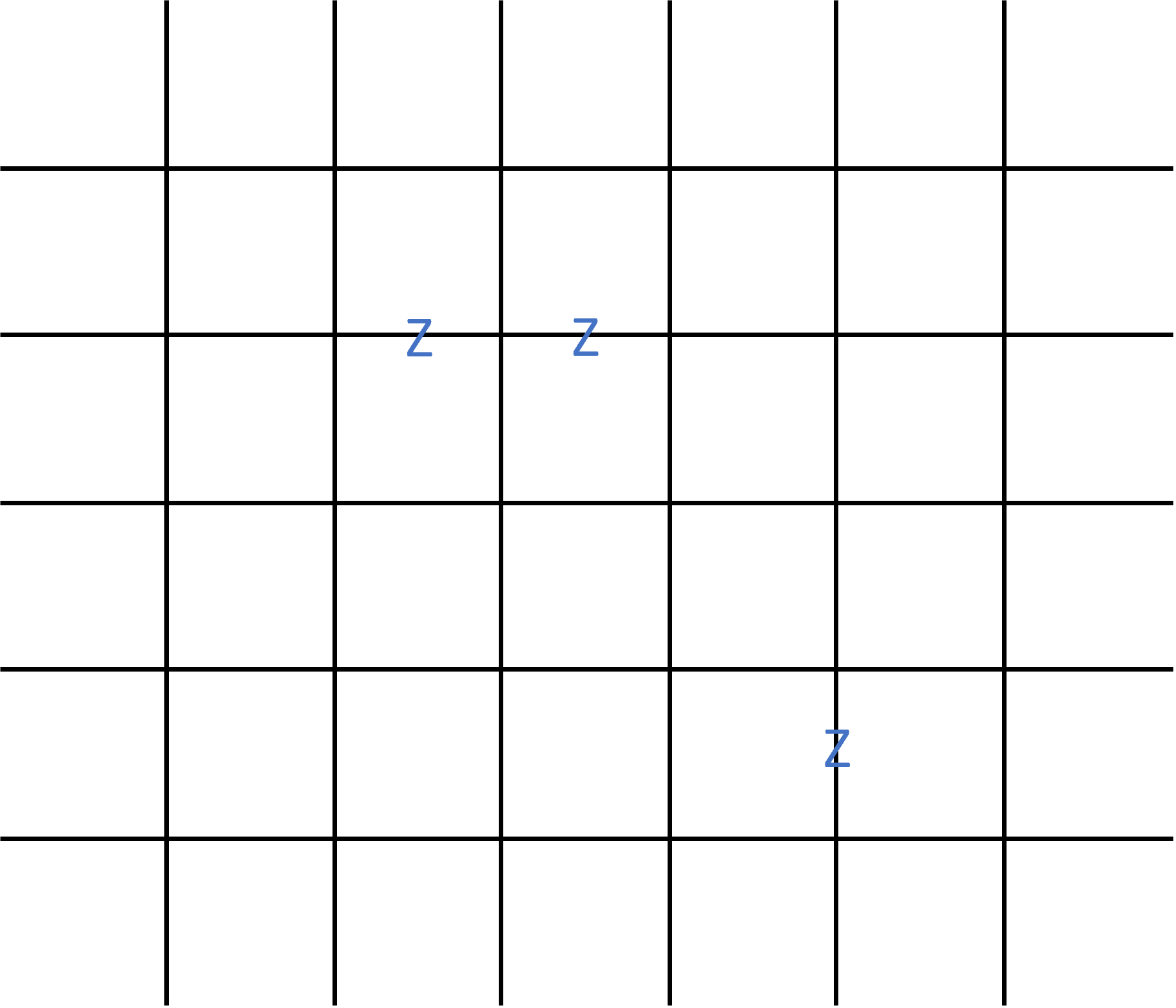} 
			\caption{\scriptsize $Z$ correction}
			\label{fig:3.6(h)}
		\end{subfigure}
		\caption[Decoding process of UF decoder]{	 UF decoding for a toric code. (a) The two thick blue edges represent erasures, and three $Z$ errors are marked on the edges. Red squares indicate nontrivial $X$-check nodes. (b)  Four invalid clusters are highlighted with an orange halo. (c)  All invalid clusters expand outward by one half-edge.  Two invalid clusters in the bottom-right corner are merged into a single valid cluster, now highlighted with a green halo. (d)  The two remaining invalid clusters expand and merge into a valid cluster, while the existing valid cluster remains unchanged. 
			(e) The blue halo represents a spanning forest for the valid clusters. (f) The peeling decoder is applied on the spanning forest to determine the correction operator.}
		\label{fig:3.6}
	\end{figure}
	
	An example of UF decoding  is shown in Figure\,\ref{fig:3.6}.

 	With the use of specific data structures, syndrome validation can be completed in $O(n \alpha(n))$ time in the worst case, where $\alpha(n)$ is the inverse of Ackermann's function, which grows extremely slowly~\cite{Tar75}. 
It has been demonstrated that the worst-case complexity of UF is $O(n)$~\cite{GB24}.

 Moreover, the UF decoder provides a \textit{distance guarantee} for its error-correction capability: it can correct any Pauli errors up to weight  $t=\lfloor\frac{d-1}{2}\rfloor$ on a distance-$d$ toric or surface code.

 \begin{lemma}[Union-Find decoder \cite{DN21}]\label{lemma:UF}
 	There exists a   decoder capable of correcting $r$  erasure errors  and additional $t$ $Z$-errors  
 	on  an $n$-qubit toric or surface code of distance $d$, provided   $r+2t<d$, with complexity $O(n)$.
 	A Similar result holds for $X$ error and erasure correction.
 \end{lemma}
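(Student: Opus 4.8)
The plan is to prove correctness by reducing the joint erasure-and-$Z$-error problem to a pure erasure problem handled by the peeling decoder of Lemma~\ref{lemma:peeling}, and then to control the size of the region produced by syndrome validation. Write the actual error as $E=E_\cE\cdot F$, where $E_\cE$ is supported on the erasure set $\cE$ (of size $r$) and $F$ is the additional $Z$-error chain (of weight $t$) in the decoding graph $G_X$, and let $\partial E=\Sigma_X$ denote its syndrome (boundary) on the $X$-check nodes. Running $\UF(G_X,\cE,\Sigma_X)$, whose syndrome-validation phase seeds clusters with the erasure edges $\cE$ and the defects $\Sigma_X$, internally produces a valid-cluster edge set $\hat\cE\supseteq\cE$ and returns a correction $\cC$ realized by a $Z$-operator $\hat E$ with $\mathrm{supp}(\hat E)\subseteq\hat\cE$. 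By the definition of a valid cluster, each component of $\hat\cE$ admits a $Z$-chain matching its enclosed syndrome, so the spanning-forest growth and peeling steps always return such a $\hat E$ with $\partial\hat E=\Sigma_X$; the only question is whether this $\hat E$ is logically correct.

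Given such a $\hat E$, correctness reduces to two claims: (i) \emph{coverage}, that $\mathrm{supp}(E)\subseteq\hat\cE$, so that $\hat E E$ is supported on $\hat\cE$ and has trivial syndrome $\partial(\hat E E)=0$; and (ii) \emph{topological triviality}, that every connected component of $\hat\cE$ contains no homologically nontrivial cycle. Granting both, $\hat E E$ is a cycle supported on a disjoint union of topologically trivial regions of $G_X$, hence a product of $Z$-type stabilizers, so $\hat E E\in\cS$ and no logical error occurs. Coverage is the more elementary half: the erasure edges are never discarded, so $\cE\subseteq\hat\cE$, and one checks that the isotropic outward growth around the defects bounding any sub-chain of $F$ fills in that sub-chain (both half-edges of each error edge are grown once its endpoints lie in the cluster) before the containing cluster is declared valid, giving $\mathrm{supp}(F)\subseteq\hat\cE$.

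The technical heart is claim (ii), which is where the hypothesis $r+2t<d$ enters, and I expect it to be the main obstacle. The key lemma to prove is a \emph{growth bound}: each final cluster is contained in a connected region of $G_X$ whose diameter is strictly less than $d$. The intuition is that a defect seeds a cluster growing by one half-edge per round that only stops when it merges with enough neighboring clusters, or reaches a boundary virtual node, to become valid; the number of rounds any cluster survives is governed by the graph distances required to pair its unmatched defects, and these distances are in turn bounded by the length of the true chain $F$ together with the extent of the erasures. Quantifying this carefully so that the resulting diameter stays below $d$ precisely under $r+2t<d$ — rather than merely below some looser $O(r+t)$ bound — is the crux, and is what forces the factor of $2$ on $t$: each genuine error edge can drive growth from two sides while an erasure edge is pre-filled and contributes its raw length $r$ only once. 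Since the minimum weight of a logical $Z$-operator is $d$, a region of diameter below $d$ cannot contain one, yielding (ii); note that this argument does not require bounding the total edge count $|\hat\cE|$, which may exceed $d$, but only the per-component diameter, a weaker condition than that appearing in Lemma~\ref{lemma:peeling}.

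For the complexity and the dual statement: syndrome validation is implemented with a weighted union-find data structure, giving $O(n\alpha(n))$ time, which can be improved to worst-case $O(n)$ by the result of~\cite{GB24}; the spanning-forest growth and peeling of Lemma~\ref{lemma:peeling} are linear, so the overall runtime is $O(n)$. The corresponding statement for $X$-errors and erasures follows verbatim by exchanging $G_X$ and $\Sigma_X$ with $G_Z$ and $\Sigma_Z$, since the whole construction is symmetric under the CSS duality of the code.
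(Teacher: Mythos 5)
Your proposal targets the right statement, but note that the paper itself never proves Lemma~\ref{lemma:UF}: it is imported wholesale from \cite{DN21}, and the only in-paper analogue of your argument is the cluster-diameter reasoning inside the proof of Theorem~\ref{th:4}. Your overall strategy coincides with that reasoning (and with \cite{DN21}): reduce to an erasure problem, bound the diameter of every final cluster by $r+2t<d$, and conclude that no homologically nontrivial cycle fits inside a valid cluster, with peeling (Lemma~\ref{lemma:peeling}) and the CSS duality supplying the rest.

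There is, however, a genuine gap in your claim (i). The coverage assertion $\mathrm{supp}(E)\subseteq\hat{\cE}$ is false in general, and your justification for it --- that ``both half-edges of each error edge are grown once its endpoints lie in the cluster \ldots before the containing cluster is declared valid'' --- is exactly where it fails: a cluster halts the moment its defect parity becomes even (or it reaches a virtual boundary node), not when it has absorbed the error chain that created its defects. Consider a single $Z$-chain along a non-geodesic path of length $3$ whose two endpoint defects happen to be adjacent in $G_X$: after one growth round the two singleton clusters fuse across the short edge, become valid, and stop, leaving most of the true chain outside $\hat{\cE}$. Decoding still succeeds --- the residual $\hat{E}E$ is a short contractible loop --- but precisely because the correct argument bounds the diameter of the connected components of $\hat{\cE}\cup\mathrm{supp}(E)$ (equivalently, of the residual cycle $\hat{E}E$), not because $E$ is contained in the grown erasure. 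Compounding this, you explicitly defer the quantitative growth bound (``the crux'') rather than prove it; since that diameter estimate under the hypothesis $r+2t<d$ is essentially the entire mathematical content of the lemma --- peeling correctness, the dual $X$ statement, and the $O(n)$ complexity via \cite{GB24} being routine --- the proposal as written sets up the right framework but does not establish the result, and its one fully asserted intermediate step is the one that is wrong.
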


	\subsection{Phenomenological  Noise Model}
 
	In addition to independent depolarizing Pauli errors on the data qubits,  imperfect measurement operations may cause the outcomes $(-1)^0$ and $(-1)^1$ to flip between each other.
	To handle faulty syndrome measurements, we consider the scenario where syndrome extraction is repeated for $d$ rounds for a code of distance $d$, without employing additional redundant stabilizer measurements~\cite{ALB20}. 
	During each round of syndrome extraction, data qubits undergo independent depolarizing errors, while measurement outcomes may suffer independent bit-flip errors. 
	Following~\cite{KL25}, we define the phenomenological noise decoding problem  as follows.

	\begin{definition} \label{def:pndp}
		\noindent{ [Phenomenological noise decoding problem]}: 
		Given a set of $m$ stabilizers of a stabilizer group $\cS$ and  $r$ rounds of syndrome extraction
		outcomes $\sigma^{(\ell)}\in\{0,1\}^m$ 
		of Pauli errors $E^{(\ell)}\in\{I,X,Y,Z\}^{\otimes n}$ and measurement errors $e^{(\ell)}\in \{0,1\}^m$ for $\ell=1,\dots, r$, the goal of phenomenological noise decoding is to find  $\hat{E}^{(\ell)}\in \{I,X,Y,Z\}^{\otimes n}$ and  $\hat{e}^{(\ell)}\in \{0,1\}^m$ for $\ell=1,\dots, r$
		such that the syndrome of $\hat{E}^{(\ell)}$ is $\hat{e}^{(\ell)}+\hat{e}^{(\ell-1)}+\sigma^{(\ell)}+\sigma^{(\ell-1)}\in\{0,1\}^m $,
		where both $\sigma^{(0)}$ and $\hat{e}^{(0)}$ are  set to $0^m$.

	\end{definition} 
	A logical error occurs if $\prod_{\ell=1}^{r} \hat{E}^{(\ell)} E^{(\ell)}$ is not a stabilizer in $\mathcal{S}$, up to a  global phase.

	Pauli errors on the data qubits accumulate over multiple rounds of syndrome extraction, affecting all subsequent syndrome measurements after the round in which they occur. Consequently, the syndrome difference between two consecutive rounds, $\sigma^{(\ell)}+\sigma^{(\ell-1)}$, is used to estimate the Pauli errors $E^{(\ell)}$ and the measurement errors $e^{(\ell)}$ occurring in the $\ell$-th round.

	To decode the phenomenological noise model for 2D topological codes, we first construct a 3D space-time joint decoding graph by introducing a time axis that represents the sequence of syndrome extractions. In addition to edges connecting check nodes within each round, temporal edges are included to link check nodes between consecutive rounds, capturing the effect of measurement errors.

	Figure~\ref{fig:decoding_DS_surface} illustrates the joint decoding graph for the $[[9,1,3]]$ rotated surface code under the phenomenological noise model.
	At each layer, a check node is nontrivial if its syndrome difference with the previous round is nontrivial. Since a measurement error affects the syndrome values in both rounds, it is said to trigger these check nodes. Between two layers, corresponding check nodes are connected by dotted lines, representing potential measurement errors.

	The resulting decoding graph  $G_X$ (with $d$ layers connected by red dashed lines) can be decoded using the UF decoder to correct $Z$ errors. Similarly, $G_Z$ (with $d$ layers connected by blue dashed lines) is used to decode $X$ errors.

	\begin{figure}[htbp] 
		\centering
		\includegraphics[height=!,width=0.4\textwidth, keepaspectratio=true]{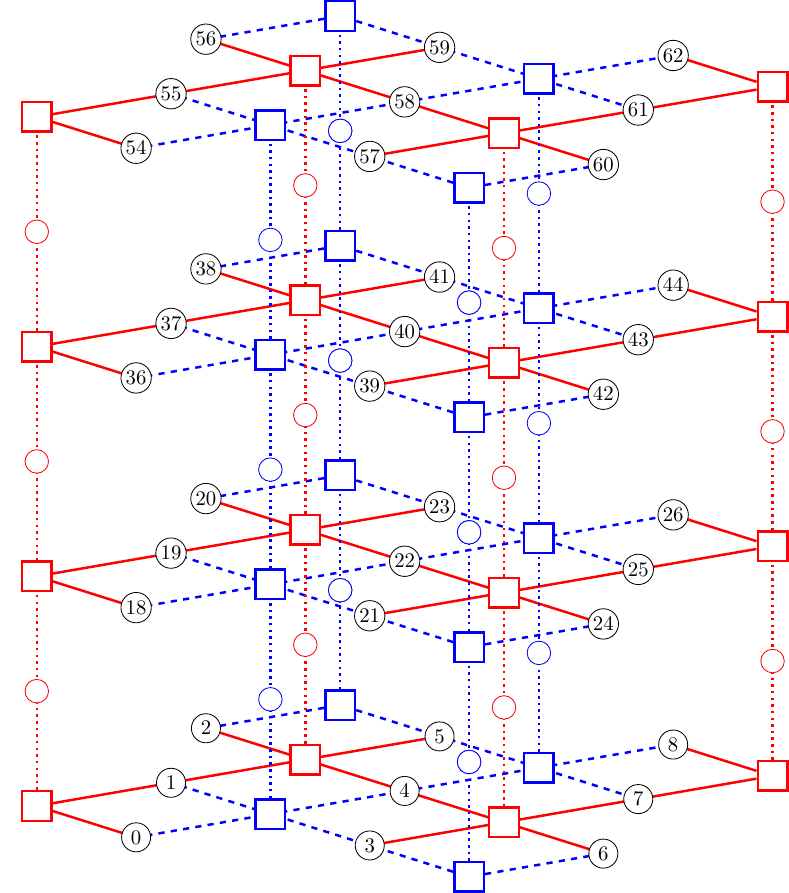}
		\caption{Decoding graphs for the $[[9,1,3]]$ rotated surface code under the phenomenological noise model. Four rounds of syndrome extraction are depicted from bottom to top across four layers.
		Circles with labels represent data qubit errors, while red and blue circles indicate measurement errors for $X$ and $Z$ stabilizer measurements, respectively. Virtual check nodes are omitted for clarity.
The final round is assumed to be free of measurement errors.
	 }
		\label{fig:decoding_DS_surface}
	\end{figure}

	\section{Enhancing the Performance of CSS Codes for Correlated Errors}\label{sec:IRUF}

To evaluate the performance of a   code of distance $d$ under a decoder, we define a weight enumerator of  undecodable errors:
\begin{align}
L(x,y)=  \sum_{i=1}^n a_i  x^i y^{n-i} ,\label{eq:LER}
\end{align}
where $a_i$ represents the number of $n$-fold Pauli operators of weight  $i$  that the decoder fails to correct. Then the logical error rate  of the decoder for the quantum code at  depolarizing  rate $\epsilon$ is $L(\frac{\epsilon}{3}, 1-\epsilon)$ (see, e.g., \cite{LB12}).
If the decoder guarantees correction for any errors of weight  up to  $t=\lfloor \frac{d-1}{2}\rfloor$, then
$a_1=\cdots=a_t=0$. Thus,  undecodable errors of weight $t+1$ are called \textit{critical errors}, as they dominate the logical error rate in the low-error regime. A good decoder should minimize the value of  $a_{t+1}$, while 
ensuring that  $a_1=\cdots=a_t=0$.

	\subsection{$X/Z$ correlations}
For CSS code decoding, treating $X$ and $Z$ errors separately simplifies the problem. However, for error models with correlations, such as depolarizing errors, this approach is suboptimal because it neglects those correlations~\cite{DT14}.  

	Specifically, a $Y$-error is the product of both an $X$- and a $Z$-error. Let $E_j = E_j^X E_j^Z$ represent a Pauli error on qubit $j$ generated by a depolarizing error with rate $\epsilon$, where $E_j^X \in \{I, X\}$ and $E_j^Z \in \{I, Z\}$. It follows that the conditional probabilities are $\mathbb{P}(E_j^X = I \mid E_j^Z = Z) = 1/2$ and $\mathbb{P}(E_j^X = X \mid E_j^Z = Z) = 1/2$. Consequently, when $E_j^Z = Z$ is known, $E_j^X$ can be interpreted as a binary erasure.

	For example,   $X^{\otimes t} \otimes I^{\otimes n-t}$ is decodable by  UF,  but $X^{\otimes t} \otimes Y \otimes I^{\otimes n-t-1}$ is not.
	However, there is only one $Z$-error component in $X^{\otimes t}\otimes Y \otimes I^{\otimes n-t-1}$ that is decodable by $UF$.
 If such a $Z$-error can be precisely located, identifying the position of $Y$ as an erasure simplifies $X$ error decoding.

By leveraging these correlations, $Y$ errors can be handled more effectively. As shown in~\cite{Fow13,DT14,PF23,YL22,iOMFC23},   MWPM   benefits from reweighting the decoding graph using $X/Z$ correlations. In the following subsection, we extend this approach to  UF.

	\subsection{Iterative Union-Find Decoder}

Our first improvement   is called the iterative union-find (IRUF) decoder.  
 IRUF first incorporates the $Z$ correction output from $Z$ error decoding as erasures, thereby enlarging the erasure set for $X$ error decoding. The $X$ correction output is then added to the erasure set for $Z$ error decoding. This process is repeated until a stopping criterion is met.

 In general, the iterations may not converge to degenerate solutions, similar to the behavior observed in the iterative MWPM decoder~\cite{YL22,iOMFC23}. 
 Thus, our stopping criterion is when the maximum number of iterations, $\mathrm{iter\_{max}}$, is reached. 
 The IRUF decoder is summarized in Algorithm~\ref{alg:IRUF}. 
 The time complexity of  Algorithm \ref{alg:IRUF} is clearly $O(\mathrm{iter\_{max}} \cdot n)$.

	\begin{algorithm}
		\caption{Iterative Union-Find decoder (IRUF)} \label{alg:IRUF}
 \textbf{Input:} decoding graphs $G_X$ and $G_Z$, erasure set $\mathcal{E}$, nontrivial  check nodes $\Sigma_X$ and $\Sigma_Z$,  and the maximum number of iterations $\mathrm{iter\_{max}}$.
 
 \textbf{Output:} $X$ correction $\cC_X$ and $Z$ correction $\cC_Z$.

 \textbf{Steps:}
		\begin{algorithmic}[1]
			\State $\cC_X$ = \Call{Union-Find}{$G_Z,\cE,\Sigma_Z$};
			\For{$i = 1$ to $i = \mathrm{iter\_{max}}$}
			\State $\cC_Z$ = \Call{Union-Find}{$G_X,\cE\cup \cC_X,\Sigma_X$};
			\State $\cC_X$ = \Call{Union-Find}{$G_Z$, $\cE\cup\cC_Z$, $\Sigma_Z$};
			\EndFor
			\State \Return $\cC_X$ and $\cC_Z$.
		\end{algorithmic}
	\end{algorithm}

	\begin{figure}[htbp] 
		\centering \scriptsize
		\begin{subfigure}{0.15\textwidth}
			\centering
			\includegraphics[height=!,width=0.99\columnwidth, keepaspectratio=true]{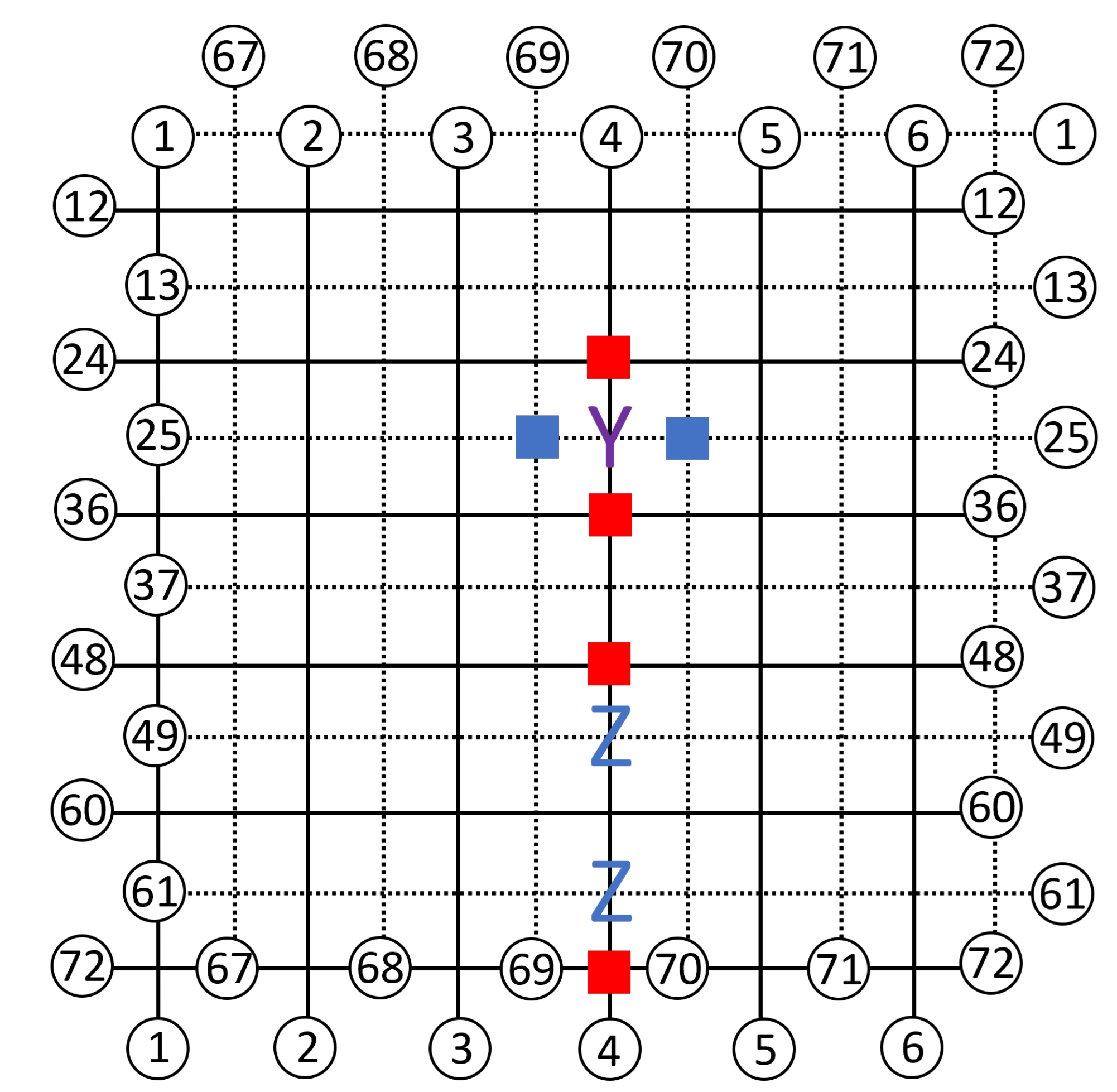} 
			\caption{\scriptsize Errors and  syndromes on   $G_X$ and $G_Z$}
		\end{subfigure}
\hfil
		\begin{subfigure}{ 0.15\textwidth}
			\centering
			\includegraphics[height=!,width=0.99\columnwidth, keepaspectratio=true]{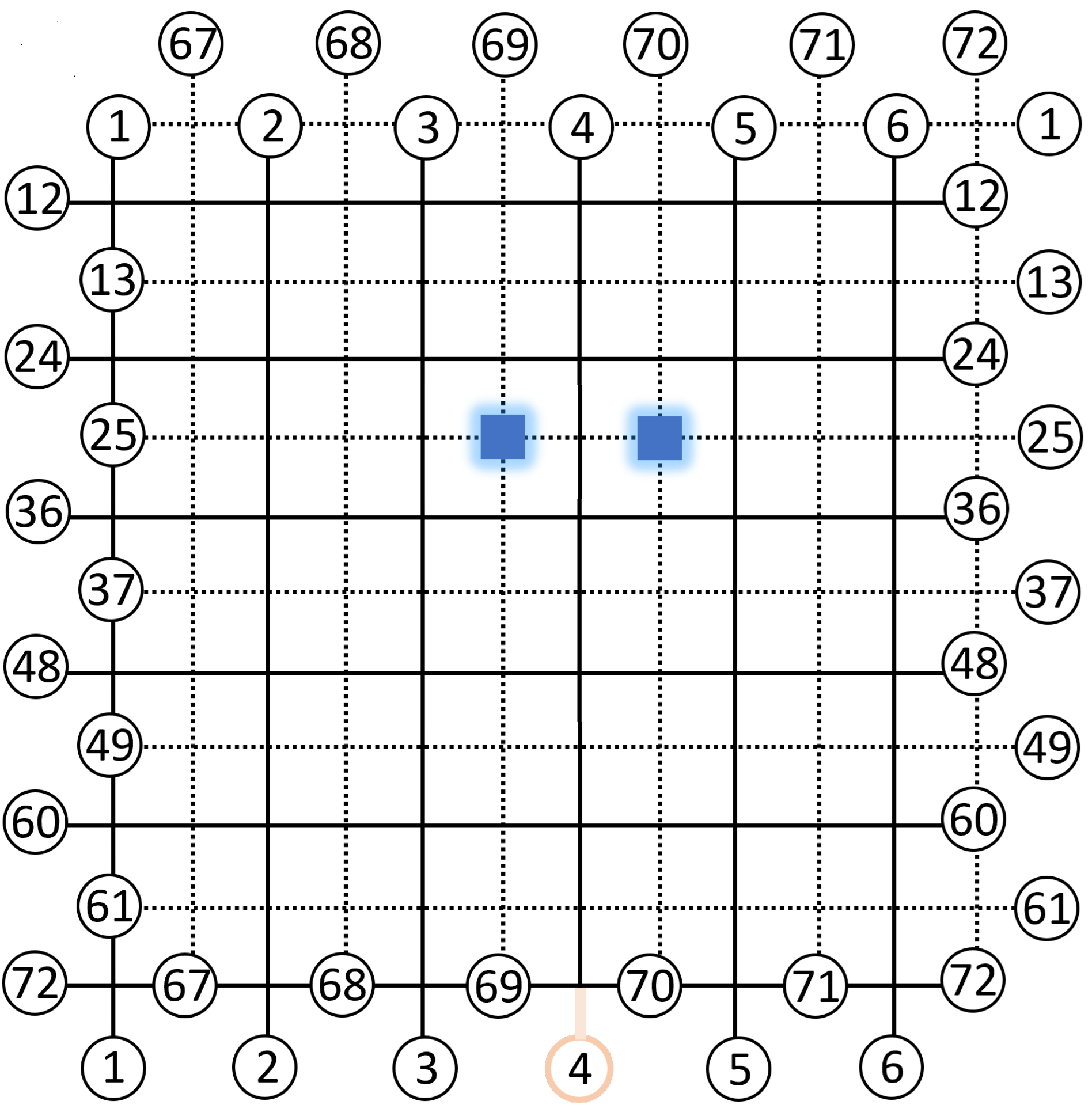} 
			\caption{ \scriptsize Initializing $G_Z$\\ \mbox{  \qquad}}
		\end{subfigure}
\hfil
		\begin{subfigure}{0.15\textwidth}
			\centering
			\includegraphics[height=!,width=0.99\columnwidth, keepaspectratio=true]{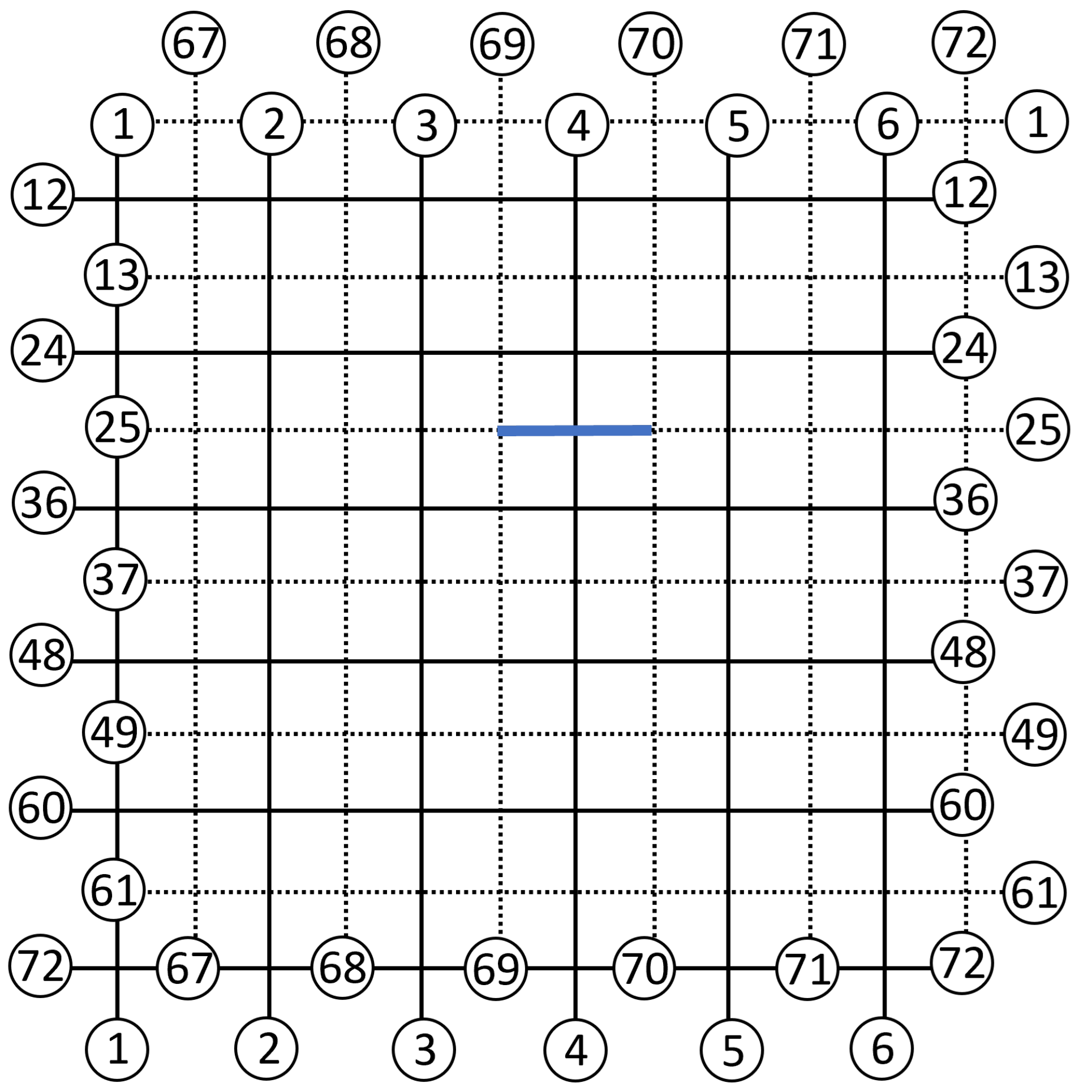} 
			\caption{ \scriptsize Decoding outcome $\cC_X^{(0)}$ }
		\end{subfigure}

					\begin{subfigure}{0.15\textwidth}
					\centering
					\includegraphics[height=!,width=0.99\columnwidth, keepaspectratio=true]{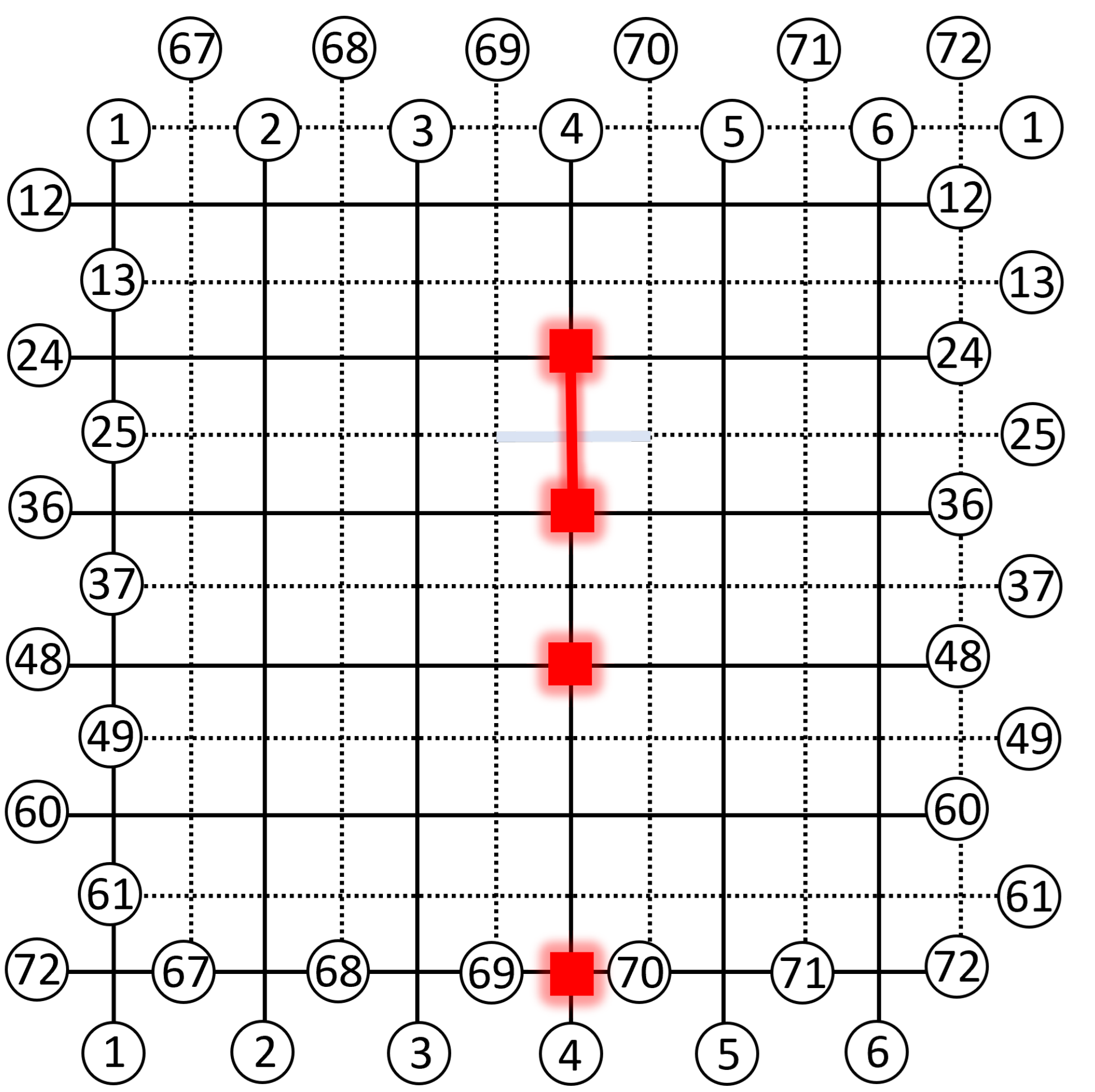} 
					\caption{ \scriptsize Initializing $G_X$ from  $\cC_X^{(0)}$ \\ \mbox{  \qquad}}
				\end{subfigure}
	\hfil
				\begin{subfigure}{0.15\textwidth}
					\centering
					\includegraphics[height=!,width=0.99\columnwidth, keepaspectratio=true]{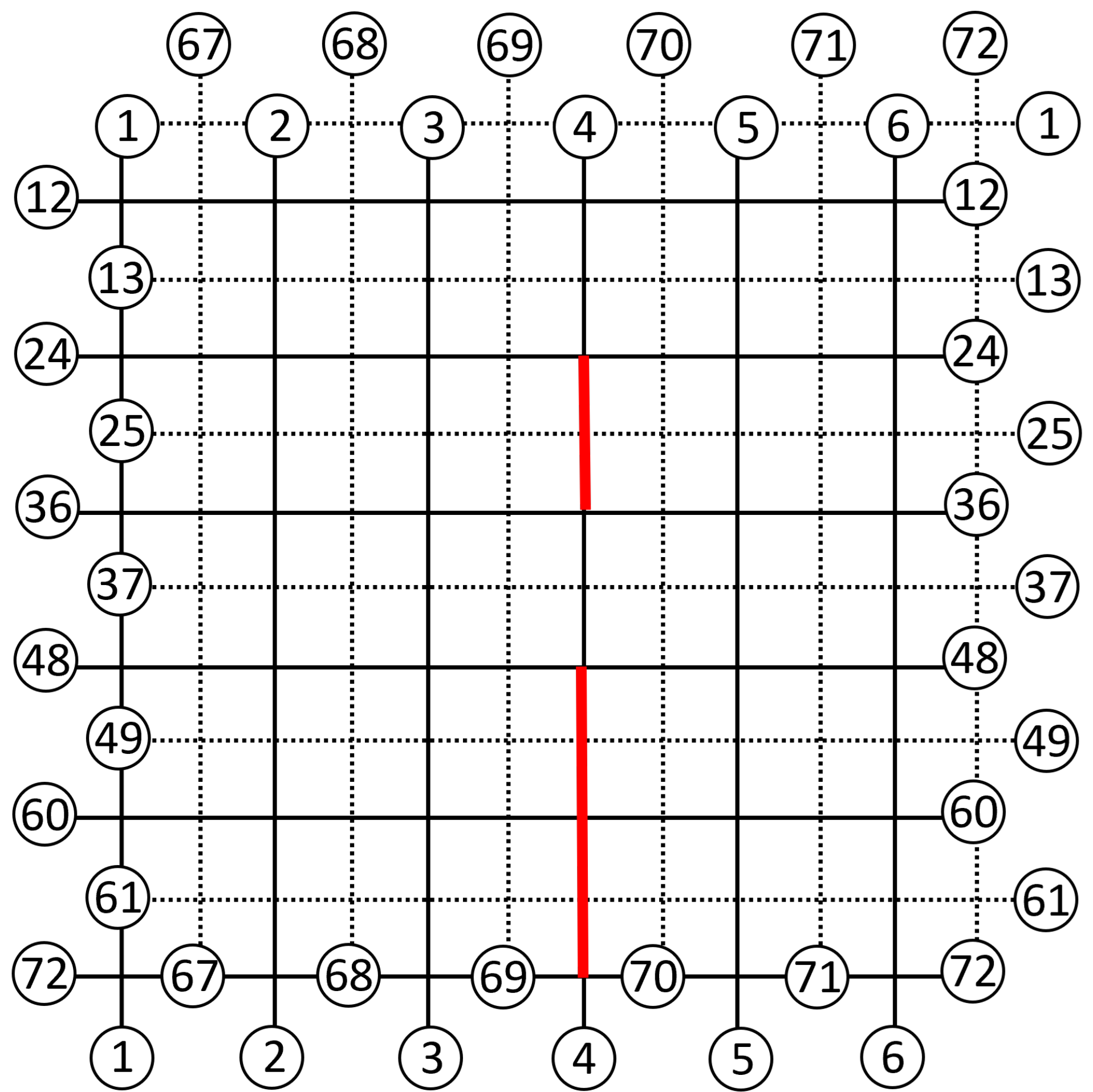} 
					\caption{ \scriptsize Decoding outcome of $\cC_Z^{(1)}$ \\ \mbox{  \qquad}}
				\end{subfigure}
\hfil
		\begin{subfigure}{0.15\textwidth}
	\centering
	\includegraphics[height=!,width=0.99\columnwidth, keepaspectratio=true]{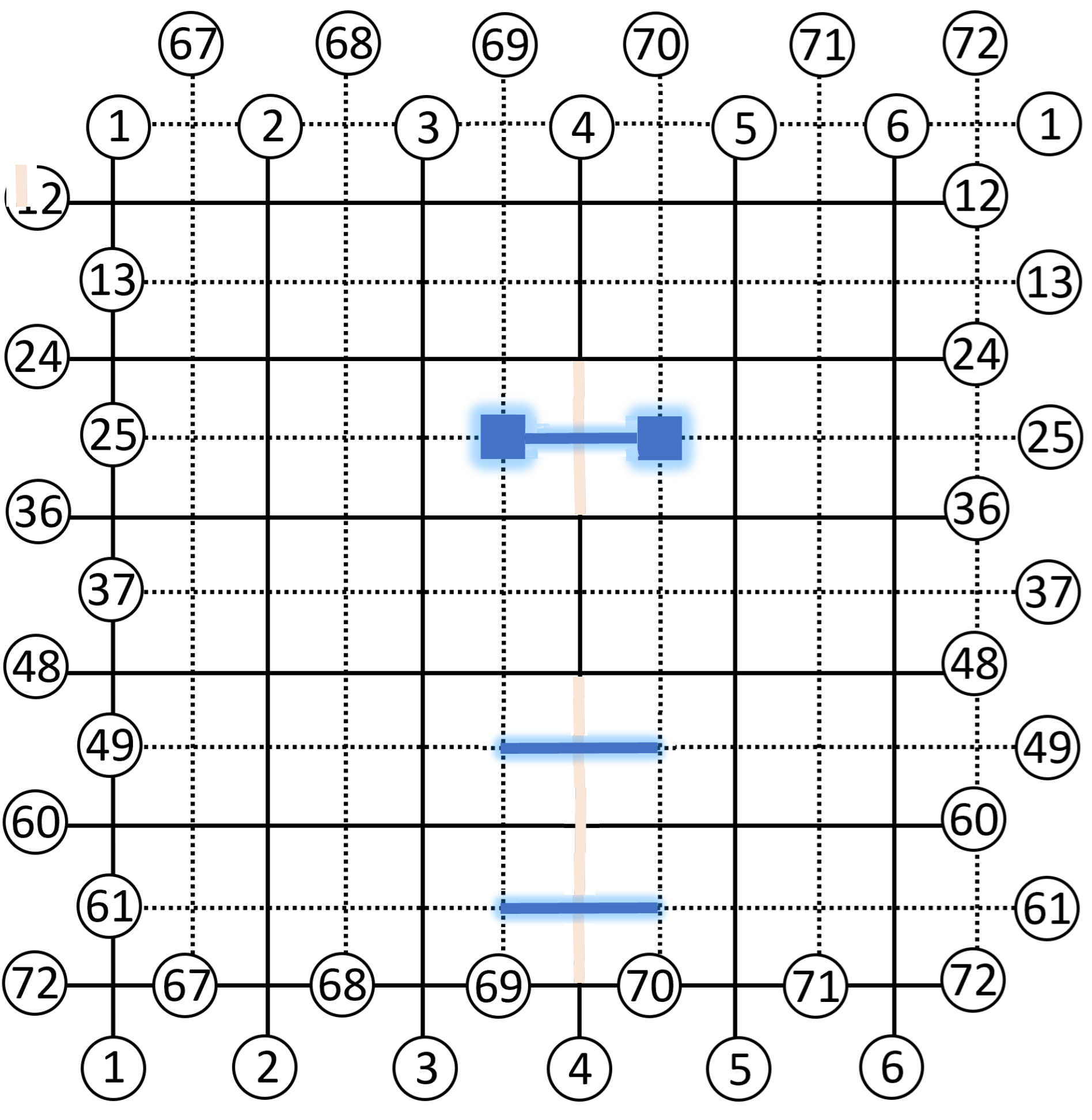} 
	\caption{\scriptsize Initializing $G_Z$ from $\cC_Z^{(1)}$, resulting in   $\cC_X^{(1)}=\cC_X^{(0)}$.}
\end{subfigure}

		\caption[Decoding process of IRUF decoder]{An example of IRUF decoding for a weight-3  error  $Y_{28}Z_{52}Z_{64}$ on the $[[72,2,6]]$ toric code. 
			 	}
		\label{fig:IRUF}
	\end{figure}

		Figure\,\ref{fig:IRUF} illustrates an example of IRUF decoding on the $[[72,2,6]]$ toric code, capable of correcting errors with weight up to two.
		Consider  a weight-3 error  $Y_{28}Z_{52}Z_{64}$. 
		Figure\,\ref{fig:IRUF}~(a) shows the joint decoding graph of $G_X$ and $G_Z$, where red squares represent nontrivial $X$-check nodes and blue squares represent nontrivial $Z$-check nodes. Solid lines denote edges of $G_X$, and dashed lines denote edges of $G_Z$. For clarity, the edges on the borders are labeled with qubit numbers to illustrate the torus structure.
		In this case, the UF decoder  outputs the $X$ correction $X_{28}$   but applies the $Z$ correction  $Z_4 Z_{16} Z_{40}$, resulting a logical $Z$ error. 
		
	While the $[[72,2,6]]$ code cannot guarantee the correction of weight-3 errors due to its minimum distance of 6, the IRUF decoder is possibly able to handle such error events by leveraging error correlations.
	In the first step,  IRUF invokes the UF decoder to address the initial $X$-errors using Figure\,\ref{fig:IRUF}~(b), producing a potential outcome $\cC_X$ as shown in Figure\,\ref{fig:IRUF}~(c), which is correct. With the additional erasure indicated in Figure\,\ref{fig:IRUF}~(d),  the subsequent decoding cycle for $Z$ errors produces the correct $Z$ correction $\cC_Z$ as shown in Figure\,\ref{fig:IRUF}~(e), which is then used to initialize  the next decoding graph $G_Z$. The additional erasures  shown in Figure\,\ref{fig:IRUF}~(f) do not affect the $X$ correction and the UF decoder again outputs the correct $X$ correction, as illustrated in Figure\,\ref{fig:IRUF}~(c).

 \begin{table}[htbp]
 	\centering
 	\scriptsize
 	\begin{tabular}{|c|c|c|c|c|c|c|}
 		\hline
 		\multirow{2}{*}{code family} &\multirow{2}{*}{$d$} & weight& \multicolumn{3}{c|}{ undecodable errors} & total     \\
 		 \cline{4-6} 
 		& & $ t+1$& 1 iter &2 iter&10 iter& errors\\
       \hline
 		rotated toric& $10$&5 &46&&& 317619225\\
 		\hline
 		surface& $7$&4 & 8&4&5&2666790 \\
 		\hline
 		rotated surface& $5$&3 &18&15&15&2700  \\
 		\hline
 	\end{tabular}
 	\caption{Smallest topological codes without a distance guarantee using IRUF for different maximum iteration limits. The number of  undecodable errors and the total number of weight-$(t+1)$ errors are provided.
 	}\label{tb:distance}
 \end{table}

 Unfortunately, unlike the distance guarantee of  UF described in Lemma~\ref{lemma:UF}, the IRUF decoder does not provide  a distance guarantee for errors up to weight  $t$.
Table\,\ref{tb:distance} summarizes the number of undecodable errors of weight  $t=\lfloor\frac{d-1}{2}\rfloor$ using IRUF for the smallest topological codes without a distance guarantee.
 Notably, no undecodable errors of weight $\lfloor\frac{d-1}{2}\rfloor$ were found for $[[2d^2, 2, d]]$ toric codes with $d \leq 10$. However, verifying the case of $d = 11$ would require a computational complexity of $1.612 \times 10^{12}$, which exceeds our available resources. Given the geometric similarity between toric codes and the other three code families, we conjecture that IRUF does not provide a distance guarantee for (rotated) toric and surface codes with sufficiently large distances.

	\section{Union-Intersection Union-Find Decoder}
	\label{sec:UIUF}

	In this section, we introduce a non-iterative decoder that achieves improved decoding performance, exhibits better error threshold behavior, and corrects  errors up to weight $t$. This approach incorporates a union-intersection procedure to more effectively leverage $X/Z$ correlations within the UF approach. As a result, this decoder is termed the \textit{union-intersection union-find} (UIUF) decoder.

\subsection{UIUF procedure}
In UF decoding, the first step is syndrome validation, where valid clusters are generated in the decoding graphs $G_X$ and $G_Z$ to ensure that potential $Z$ and $X$ corrections, consistent with the given error syndromes, are contained within these clusters. The peeling decoder is then applied to determine a pair of $Z$ and $X$ corrections.
 	
To utilize $X/Z$ correlations, the $X$ and $Z$ UF decoding outputs are used as erasures for each other in IRUF, and this process iterates to achieve refined results. However, according to the data processing inequality~\cite{CT91}, 
it is more effective to work directly with the valid clusters generated by syndrome validation to utilize $X/Z$ correlations, rather than relying on transformed outputs from the peeling decoder.
 Therefore, to accurately identify the locations of  $Y$ errors,   we focus on valid clusters within both $G_X$ and $G_Z$.
Specifically, qubits whose corresponding edges in    $G_X$ and $G_Z$ appear within valid clusters are likely 
to have $Y$ errors and are thus marked as erasures. 
This process is referred to as the \textit{intersection step}.

 	The full procedure of the UIUF decoder is outlined in Algorithm~\ref{alg:UIUF}. First, syndrome validation is performed, followed by the intersection step. The output of the intersection step is then added to the erasure set. Finally, the UF subroutines are executed to determine the $X$ and $Z$ corrections, respectively.
 	
 	\begin{algorithm}
 	\caption{Union-Intersection Union-Find (UIUF)} \label{alg:UIUF}
 	\textbf{Input:} decoding graphs $G_X$ and $G_Z$, erasure set $\mathcal{E}$, nontrivial  check nodes $\Sigma_X$ and $\Sigma_Z$.
 	
 	\textbf{Output:} $X$ correction $\cC_X$ and $Z$ correction $\cC_Z$.
 	
 	\textbf{Steps:}
 	\begin{algorithmic}[1]
 		
 		\State $\cL_{X}$ = \Call{\SV}{$G_X,\Sigma_X$} \Comment{Union}
 		\State $\cL_{Z}$ = \Call{\SV}{$G_Z,\Sigma_Z$}
 		\For{each qubit $j$}\Comment{Intersection}
 		\If{the edges corresponding to qubit $j$ in $G_X$  and $G_Z$  
 			\State are each covered by a cluster in $\cL_{X}$  and $\cL_{Z}$, 
 			\State				respectively,}
 		\State   $\cE=\cE\cup \{j\}$.
 		\EndIf
 		\EndFor
 		
 		\State $\cC_X$ = \Call{Union-Find}{$G_Z,\cE,\Sigma_Z$};  \Comment{Union-Find}
 		\State $\cC_Z$ = \Call{Union-Find}{$G_X$, $\cE$, $\Sigma_X$};
 		\State \Return $\cC_X$ and $\cC_Z$.
 	\end{algorithmic}
 \end{algorithm}
 
The time complexity of Algorithm~\ref{alg:UIUF} is at most twice that of the UF decoder, as it  adds one additional syndrome validation step for both $X$- and $Z$-errors, and the intersection step is much faster than the spanning forest growth. 
 	Importantly,  the UIUF decoder provides a distance guarantee on its error correction capability as follows.

 \begin{theorem} \label{th:4}
For an $n$-qubit toric or surface code of distance $d$, the UIUF decoder in Algorithm~\ref{alg:UIUF} can correct up to $r$ erasure errors, along with an additional Pauli error of weight $t$ acting on the remaining $n - r$ qubits, provided that $r + 2t < d$.
 \end{theorem}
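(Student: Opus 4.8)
The plan is to reduce the statement to the correctness of the two final Union-Find calls of Algorithm~\ref{alg:UIUF}, each of which is an ordinary UF decoding on $G_Z$ (respectively $G_X$) with the enlarged erasure set $\mathcal{E}\cup\mathcal{I}$, where $\mathcal{I}$ denotes the set of qubits added in the intersection step. The obstacle to quoting Lemma~\ref{lemma:UF} directly is that the intersection step may mark qubits carrying no $Y$-error, so $|\mathcal{E}\cup\mathcal{I}|$ can be large and the counting hypothesis $r'+2t'<d$ of Lemma~\ref{lemma:UF} need not hold for these calls. Instead I would reopen the proof of Lemma~\ref{lemma:UF} and extract its homological core: a UF decoding on $G_X$ returns a $Z$-correction that is stabilizer-equivalent to the true $Z$-error as soon as every cluster produced by syndrome validation is confined to a region too small to support a nontrivial logical operator (e.g.\ has diameter less than $d$), because a failure forces the product of the true error and the correction to contain a logical operator of weight at least $d$, which cannot fit inside such a cluster. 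Under this criterion, erasures that cover error-free qubits are harmless.

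The first step is to control the clusters $\mathcal{L}_X$ and $\mathcal{L}_Z$ produced by the two syndrome validations. Recalling that UF syndrome validation treats the input erasures $\mathcal{E}$ as pre-grown clusters, the residual $Z$-error (respectively $X$-error) on the $n-r$ non-erased qubits has weight at most $t$; since $r+2t<d$, the diameter bound underlying Lemma~\ref{lemma:UF} shows that every cluster of $\mathcal{L}_X$ and of $\mathcal{L}_Z$ has diameter less than $d$. By the definition of the intersection step, each qubit of $\mathcal{I}$ has its $G_X$-edge inside a cluster of $\mathcal{L}_X$ and its $G_Z$-edge inside a cluster of $\mathcal{L}_Z$; hence the erasure seeds $\mathcal{E}\cup\mathcal{I}$ of the final $G_X$ call lie entirely within $\mathcal{L}_X$, and symmetrically for the final $G_Z$ call.

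The heart of the argument is a containment lemma: running syndrome validation on $G_X$ with the same nontrivial checks $\Sigma_X$ but with the extra erasure edges coming from $\mathcal{I}$---all lying inside $\mathcal{L}_X$---produces clusters that remain contained in $\mathcal{L}_X$. The idea is that erasure edges act as cost-free connections, so every cluster can attain validity (an even number of enclosed nontrivial checks, or contact with a boundary node) with no more growth than before, and the validated region therefore cannot expand past $\mathcal{L}_X$. Granting this, each cluster of the final $G_X$ call has diameter $<d$, so the homological criterion yields a $Z$-correction $\hat{E}^Z$ with $\hat{E}^Z E^Z\in\mathcal{S}$ up to phase, where $E^Z$ is the $Z$-part of $E$. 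The identical argument on $G_Z$ gives $\hat{E}^X$ with $\hat{E}^X E^X\in\mathcal{S}$ up to phase, and by the CSS structure the product $\hat{E}E=(\hat{E}^X E^X)(\hat{E}^Z E^Z)$ lies in $\mathcal{S}$ up to phase, which is exactly the required correction.

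I expect the containment lemma to be the main obstacle, because enlarging the erasure set changes which intermediate sub-clusters are invalid and how they grow, so the claim that no cluster escapes $\mathcal{L}_X$ must be justified with care---either by establishing a monotonicity property of syndrome validation under enlargement of the erasure set, or by comparing the two validated regions directly. A secondary point I would verify is that the residual error is genuinely covered by the final clusters, so that the peeling step of Lemma~\ref{lemma:peeling} acts on it; this should follow from the final call carrying the same syndrome $\Sigma_X$, which already forces the error to be covered within $\mathcal{L}_X$.
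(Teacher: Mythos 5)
Your skeleton matches the paper's at the top level (bound the diameters of the validated regions $\mathcal{L}_X$ and $\mathcal{L}_Z$, note the intersection set lies in both, conclude the final UF calls produce clusters too small to support a logical operator), but the step you yourself flag as the main obstacle --- the containment lemma --- is not merely unproven, it is false, and the paper closes the argument by a different, quantitative mechanism. Containment fails because the final call on $G_X$ restarts syndrome validation from the connected components of $\mathcal{E}\cup\mathcal{I}$ together with the nontrivial checks; a single valid cluster of $\mathcal{L}_X$ can split into several individually invalid seed components (e.g.\ two nontrivial checks at the ends of a thin validated path with no intersection erasure joining them), and invalid clusters grow isotropically by half-edges until they merge, so they can and do spill outside $\mathcal{L}_X$. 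There is no monotonicity of the validated region under enlargement of the erasure set.

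What the paper does instead is write $E$ as $s$ $X$'s, $q$ $Y$'s and $t-s-q$ $Z$'s and track two quantities your proposal never uses. First, the intersection set has diameter at most $\min\{2s+2q,\,2t-2s\}+r$, because it is contained in clusters of both $\mathcal{L}_Z$ (diameter $\le 2s+2q+r$) and $\mathcal{L}_X$ (diameter $\le 2t-2s+r$); the cross-graph minimum is essential. Second, the $Y$ positions are absorbed into the erasure set, so the final call on $G_Z$ faces only $s$ non-erased errors and the final call on $G_X$ only $t-s-q$. The standard growth-step count then bounds the final cluster diameter on $G_Z$ by $\min\{2s+2q,2t-2s\}+r+2s\le(2t-2s)+r+2s=r+2t<d$, and symmetrically on $G_X$, after which Lemma~\ref{lemma:UF} applies. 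Neither ingredient is dispensable: without erasing the $Y$'s the $G_Z$ bound becomes $2(s+q)+\min\{\cdot\}+r\le 2t+2q+r$, and without the minimum it becomes $2(t-s-q)+(2t-2s)+r$, neither of which is below $d$ in general. Your ``harmless extra erasures'' framing, under which the intersection merely adds cost-free edges and the argument would apply verbatim to plain UF, therefore cannot reach the theorem; the intersection step must be credited both for shrinking the erasure diameter via the minimum and for converting the correlated $Y$ components into erasures that no longer count as growth steps.
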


 \begin{proof}
 	
 	Consider $r$ erasures and a Pauli error $E$ of weight $t$ such that $r + 2t < d$. 
 	Let $E$ contain $s$ Pauli $X$'s, $q$ Pauli $Y$'s, and $t - s - q$ Pauli $Z$'s. In the decoding graph $G_Z$, there are $s+q$ $X$-errors and $r$ erasures; in $G_X$, there are $t - s$ $Z$-errors and $r$ erasures.

To analyze error clustering in the UF decoder, consider the special case  where $E$ consists entirely of $t$ $Z$-errors. Each individual $Z$ error typically results in a pair of adjacent nontrivial check nodes in the decoding graph or a connection to a virtual boundary node. When multiple $Z$-errors are present, the corresponding check nodes merge into larger connected clusters. At each union step, at least one additional half-edge of $E$ is covered.   This implies that the largest possible cluster diameter in a decoding graph is at most $2t$~\cite{DN21}. When $r$ additional erasures are present, they do not increase the number of union steps but can contribute to a larger cluster size. In the worst case, erasures extend the cluster further, leading to a maximum cluster diameter of $2t+r$.

By this reasoning, after the initial union steps, the largest cluster in $\cL_Z$ has a diameter of at most $2s + 2q + r$, while the largest cluster in $\cL_X$ has a diameter of at most $2t - 2s + r$. The intersection step produces an erasure set of diameter  at most $\min\{2s+2q, 2t - 2s\} + r$  for the following UF decoding.

Note that the $Y$ errors are now covered by erasures. Consequently, in $G_Z$, there are $s$ errors and an erasure set with a diameter of at most $\min\{2s+2q, 2t - 2s\} + r$, while in $G_X$, there are $t - s-q$ $Z$-errors and an erasure set with a diameter of at most $\min\{2s+2q, 2t - 2s\} + r$. Since 
 	\begin{align}
 	&\min\{2s+2q, 2t - 2s\} + r + 2s \notag\\
 	&\leq ( 2t - 2s)+ r + 2s = r + 2t < d 
 	\end{align} 
 	and 
	\begin{align}
 	&\min\{2s+2q, 2t - 2s\} +r + 2(t - s-q)\notag \\
 	 &\leq  2s+2q + r + 2(t-s-q) = r + 2t < d, 
	\end{align} 
the largest  clusters in the UF steps in lines 10 and 11 of Algorithm~\ref{alg:UIUF} 
will not cover a nontrivial logical error.  Thus UIUF correctly estimates the $X$ and $Z$ corrections by Lemma~\ref{lemma:UF}.
 	
 	Algorithm~\ref{alg:UIUF} includes two syndrome validations, one intersection step, and two UF subroutine calls. The intersection step, which checks each qubit in $\cL_{X}$ and $\cL_{Z}$, runs in $O(n)$ time. Therefore, the overall complexity of the UIUF decoder remains $O(n)$, consistent with the UF decoder.

 \end{proof}

Consider again the example in Figure \ref{fig:IRUF}. The  initial syndrome validation results on both $G_X$ and $G_Z$ are shown in Figure\,\ref{fig:3.11}~(a).
Here,  the green cross marks the intersection of two edges in  $G_X$ and $G_Z$, which  indicates a shared qubit. Consequently, the  intersection step  identifies   this qubit as an erasure, as shown in Figure\,\ref{fig:3.11}~(b).  The subsequent UF subroutines then yield the correct $X$ and $Z$ corrections.

\begin{figure}[h] 
	\centering
	\begin{subfigure}{0.23\textwidth}
		\centering
		\includegraphics[height=!,width=0.99\columnwidth, keepaspectratio=true]{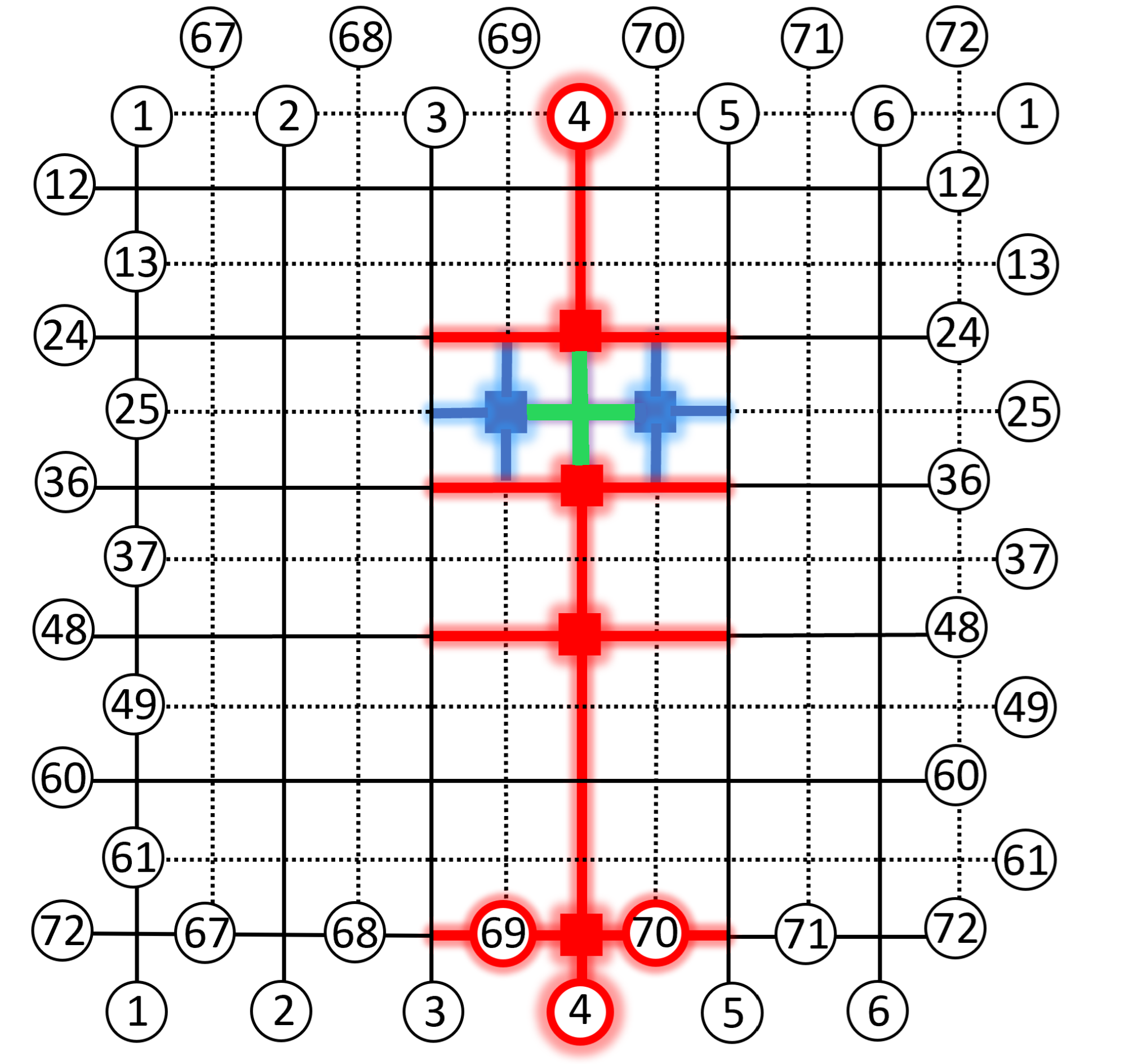} 
		\caption{}
		\label{fig:3.11(a)}
	\end{subfigure}
	\hspace{0.2cm}
	\begin{subfigure}{0.23\textwidth}
		\centering
		\includegraphics[height=!,width=0.99\columnwidth, keepaspectratio=true]{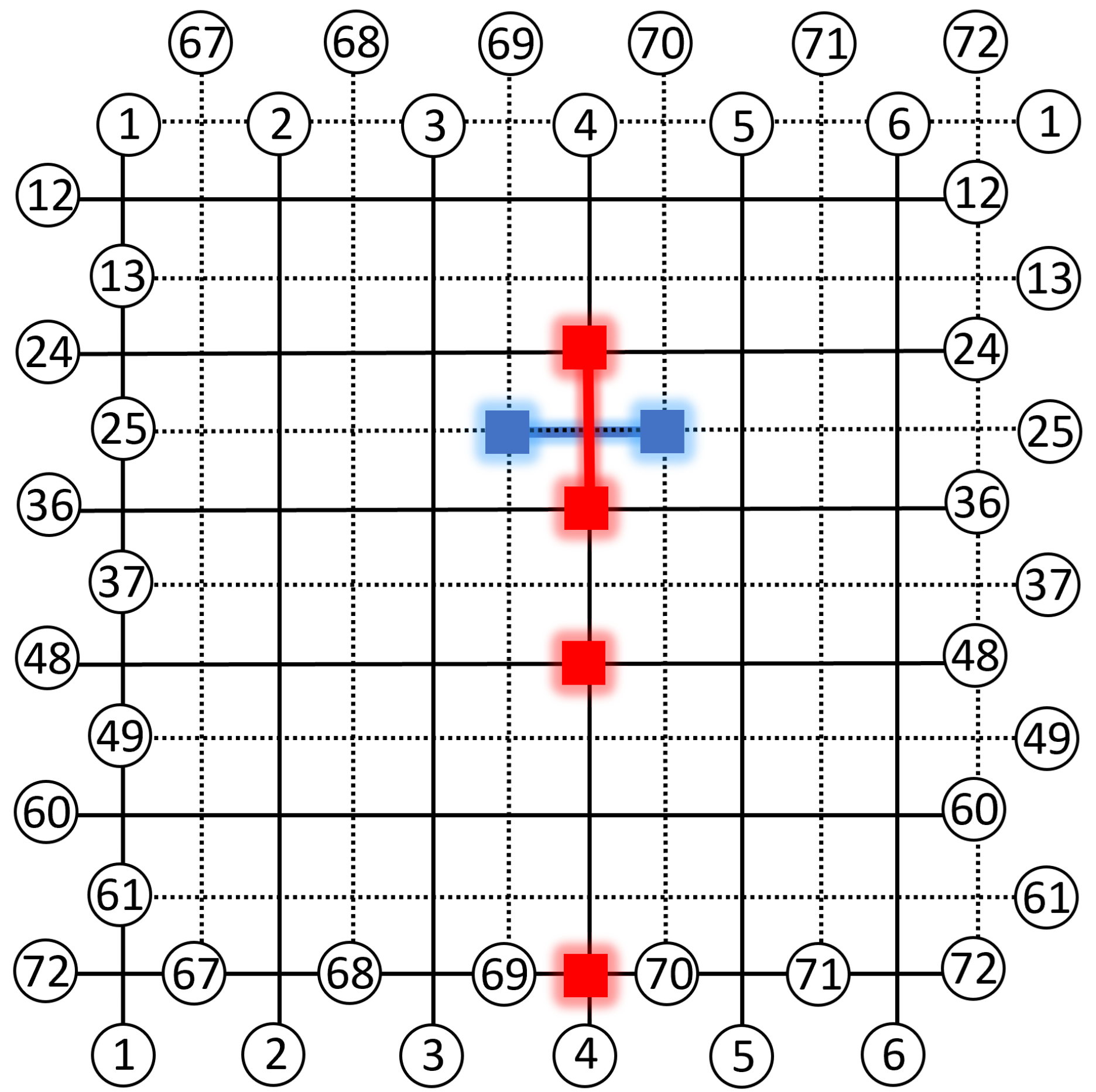} 
		\caption{}
		\label{fig:3.11(b)}
	\end{subfigure}
	\caption[Syndrome validation and intersection step of UIUF decoding]{
		Applying syndrome validation and the intersection step to the example in Figure\,\ref{fig:IRUF}. (a) Results of syndrome validation. (b) Results of the intersection step. }
	\label{fig:3.11}
\end{figure}

\subsection{Critical error analysis}

Consider a decoder with distance guarantee to correct errors of weight up to $t=\lfloor\frac{d-1}{2}\rfloor$.
In general, only Pauli errors with exactly $t+1$ components of either $X$ or $Z$ type are the weight-$t+1$ errors that remain  undecodable  and lack correlations useful for decoding.

	Let us illustrate this with the  $[[36,2,6]]$ toric code with $t=2$.    The UF, IRUF (one iteration), and UIUF decoders all have a distance guarantee on this code.   Table\,\ref{tb:wt3_error} provides the enumeration of errors of different types.
As shown, errors of type $XXX$ and $ZZZ$ that are  undecodable  by UF remain  undecodable  by IRUF and UIUF, as expected. However, IRUF and UIUF can correct significantly more errors of type $YYY$.
	
		\begin{table}[htbp]
			\footnotesize
		\begin{center}
			\begin{tabular}{|c|c|c|c|}
				\hline
			 {Error type} & UF & IRUF  &UIUF\\
				\hline
				all &12358&3056&2108\\
				$XXX$&786&786&786\\
				$ZZZ$&786&786&786\\
				$YYY$&1354&477&225\\
				\hline
			\end{tabular}
		\end{center}
		\caption{The number of  undecodable weight-3 errors for the $[[36,2,6]]$ rotated toric codes using the UF, IRUF (one iteration), and UIUF decoders, respectively.  }\label{tb:wt3_error}
	\end{table}

	Figure \ref{fig:4.1} shows the number of  undecodable   errors of weights $3,$ $4,$ and $5$ for the $[[36,2,6]]$ rotated toric codes using the three decoders, respectively.
One can see that IRUF and UIUF correct significantly more low-weight errors than UF, resulting in a lower logical error rate, which is approximately 4–6 times better in the low error rate regime.

\begin{figure}[htbp]
	\centering
	\includegraphics[height=!,width=0.8\columnwidth, keepaspectratio=true]{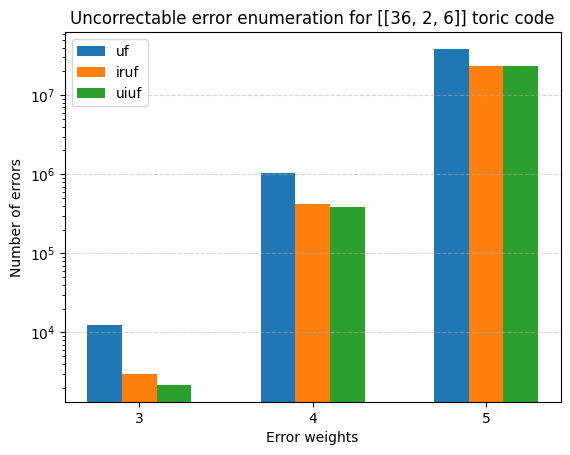} 
	\caption[Enumeration of  undecodable errors  of the three decoders of a toric code]{The number of  undecodable   errors of weights $3,$ $4,$ and $5$ for the $[[36,2,6]]$ rotated toric codes using the UF, IRUF (one iteration), and UIUF decoders, respectively.  }
	\label{fig:4.1}
\end{figure}

	As the error weight increases, the number of undecodable errors in UIUF and UF becomes more similar. This can be explained as follows. All decoders can correct the same total number of errors, determined by the number of distinct syndromes multiplied by the size of the stabilizer group. UF inherently prioritizes decoding low-weight $X$ and $Z$ errors for a given syndrome,  even though their combined effect may result in high-weight errors.
	
	With the intersection step, UIUF can correct more low-weight errors that UF would otherwise misidentify as high-weight errors. However, this advantage comes at the cost of correcting fewer high-weight errors, leading to a shift in the undecodable error distribution. As the error weight increases, there must be a threshold beyond which UF starts to correct more errors than UIUF.

\subsection{Weighted growth}
In \cite{DN21}, a weighted growth (WG) strategy is proposed for syndrome validation, prioritizing the growth of the smallest cluster first. By prioritizing smaller clusters, fewer erasures are introduced overall, potentially increasing the likelihood of success in subsequent erasure decoding steps. This approach naturally extends to UIUF and proves particularly effective, as the accuracy of the intersection step strongly impacts subsequent $X$ and $Z$ decoding. As demonstrated in the following section, our simulations show that WG significantly improves the performance of UIUF.

	\section{Simulation}
	\label{sec:sim}
 In this section, we present numerical simulations comparing the decoding performance of the IRUF and UIUF decoders against the UF and MWPM decoders. These decoders are applied to toric, rotated toric, surface, and rotated surface codes of various distances under depolarizing errors with perfect syndrome measurements
and also the phenomenological noise model with faulty syndrome measurements.

 For the phenomenological noise model, the data qubits experience depolarizing errors with a physical error rate of $\epsilon$, and we assume that the measurement outcomes are flipped with probability $\epsilon$.

 For each performance curve in this section, each data point is obtained by simulating the decoding process until at least $10^4$ failures for logical error rates above $10^{-4}$, 800 failures for logical error rates above $10^{-6}$, and 200 failures for logical error rates below $10^{-6}$.

	\subsection{Decoding performance of UIUF  and IRUF}

	\begin{figure}[h]
		\centering
		\begin{subfigure}{0.49\textwidth}
			\centering
			\includegraphics[height=!,width=0.99\columnwidth, keepaspectratio=true]{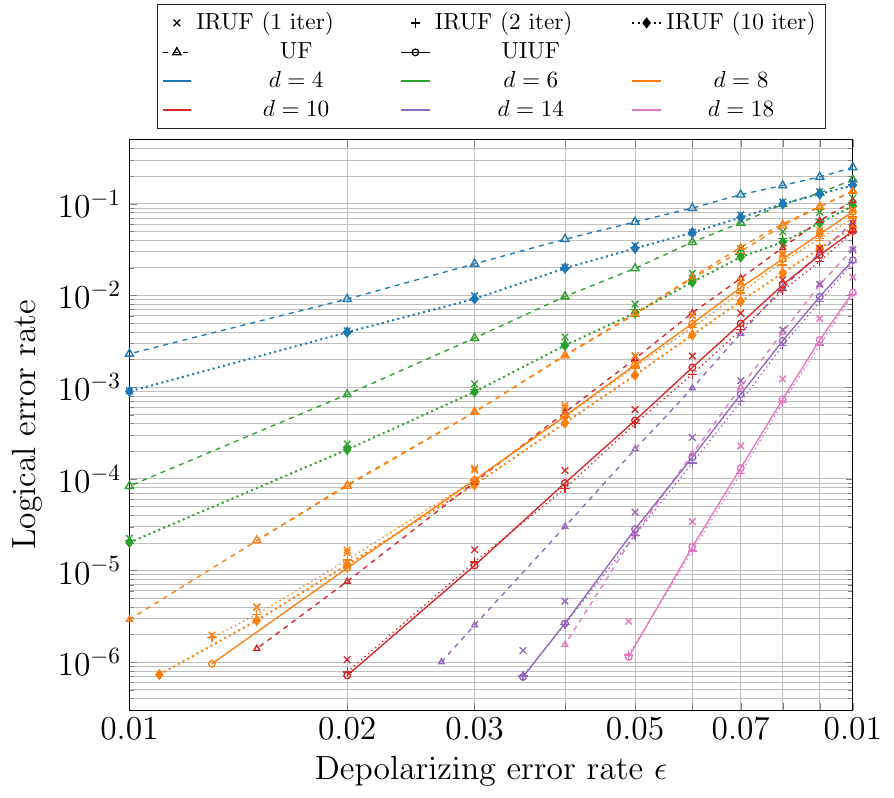} 
		\end{subfigure}
		\caption[Decoding performance comparison of the UF, IRUF, and UIUF decoders of the toric code]{
				Comparison of the decoding performance of   UF, UIUF, and IRUF with $\mathrm{iter_{max}} =1,2$   for $[[2d^2, 2, d]]$ toric codes.
				The performance of IRUF with $\mathrm{iter_{max}} = 10$ is provided for $d = 4, 6, 8$.
		}
		\label{fig:toric_performance}
	\end{figure}
	
	Figure\,\ref{fig:toric_performance} compares the decoding performance of the UIUF and IRUF decoders on  $[[2d^2, 2, d]]$ toric codes with varying maximum iterations, $\mathrm{iter_{max}}$.  Both UIUF and IRUF decoders demonstrate superior performance over the UF decoder.

 Notably, the UIUF decoder reduces the logical error rate by more than an order of magnitude compared to the UF decoder at a logical error rate near $10^{-5}$.
	For example, for $d=14$ at $\epsilon = 0.04$, the logical error rates are $7.1\times 10^{-7}$ for UIUF and $7.6\times 10^{-6}$ for UF.
	Similarly, for $d=10$ at $\epsilon = 0.02$, the logical error rates are $7.1\times 10^{-7}$ for UIUF and $7.6\times 10^{-6}$ for UF.
	Since UIUF guarantees error correction up to the code distance, this performance gap between UIUF and UF persists in the low-error regime.

	IRUF with a single iteration offers a noticeable improvement over UF,  and additional iterations further enhance performance in the high-error regime, particularly for codes with larger distances.
	In contrast, IRUF with  more iterations shows similar performance in the low-error regime. 
	This indicates that additional iterations have a limited impact on reducing the number of critical errors.
It is evident that performance does not improve significantly when $\mathrm{iter_{max}} =10$. 

Additionally, UIUF demonstrates comparable performance to IRUF with two iterations while requiring lower computational complexity.  Moreover, UIUF outperforms IRUF  with  $\mathrm{iter_{max}} =10$ in the low-error regime for $d=8$ due to its distance guarantee, and we expect this trend to hold for other  codes at lower error rates.

	\begin{figure}[htbp]
					\centering				
		\begin{subfigure}{0.48\textwidth}
		\includegraphics[height=!,width=0.99\columnwidth, keepaspectratio=true]{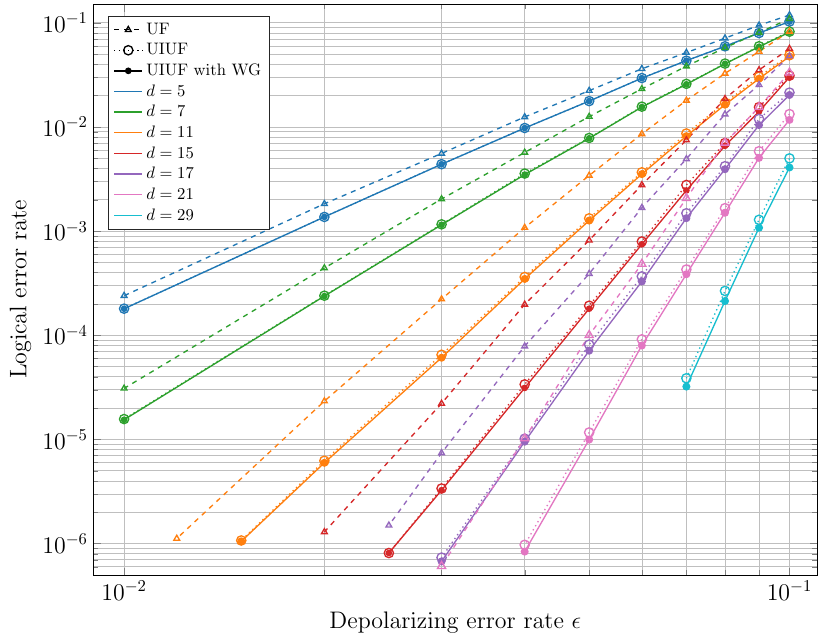} 
		\end{subfigure}
		\caption{	Comparison of the decoding performance of the  UF, UIUF, and UIUF with WG decoders on $[[d^2, 1, d]]$ rotated surface codes.
	}
		\label{fig:UIUF_WG}
	\end{figure}

Similar performance trends are observed for rotated toric codes, surface codes, and rotated surface codes. 
Figure\,\ref{fig:UIUF_WG} shows that UIUF outperforms UF on rotated surface codes, with the improvement being more significant for larger code distances.

 The effect of WG is illustrated in Figure\,\ref{fig:UIUF_WG}. For small distances, UIUF performs similarly with and without WG. However, at larger distances, WG provides a slight improvement in logical error rates.
	For $d=29$, UIUF without WG has a logical error rate approximately $1.2$ times higher than with WG.
This effect will be reflected in the threshold analysis in the next subsection.

	All the above observations also hold for the phenomenological noise model, as shown in Figure~\ref{fig:DS}, where $d+1$ rounds of syndrome extraction are performed for a code of distance $d$. In this setup, the final round is assumed to have perfect syndromes, which are used to determine whether a logical error occurred in the previous rounds. Note that the logical error rate here is not normalized by the number of rounds.

At $\epsilon=0.005$, UIUF with WG improves UF with WG by more than an order of magnitude for the unrotated toric code with $d=10$. Similar trends are observed for larger distances.

\begin{figure}[htbp]
			\centering				
	\begin{subfigure}{0.48\textwidth}
		\includegraphics[height=!,width=0.99\columnwidth, keepaspectratio=true]{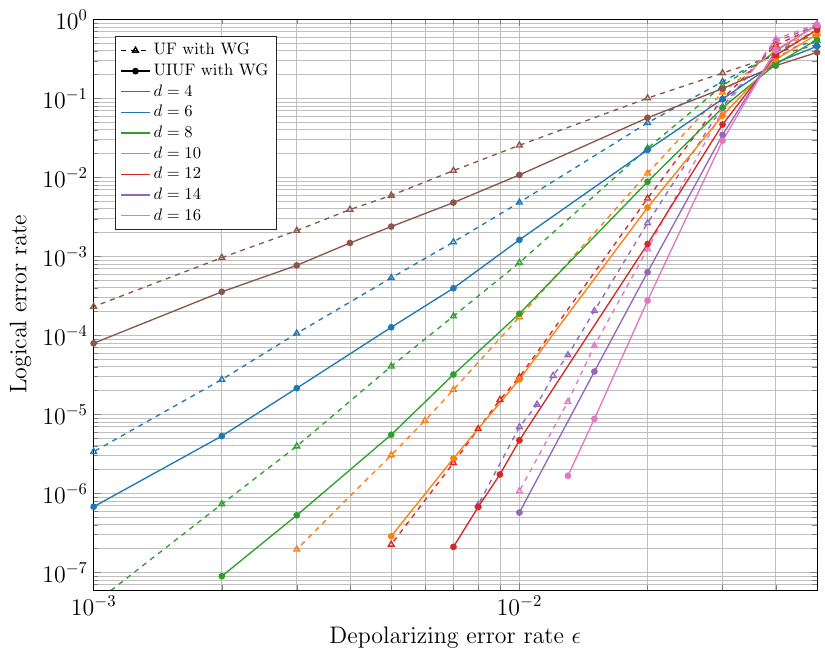} 
	\end{subfigure}
	\caption{	Comparison of the decoding performance of the UF with WG and UIUF with WG decoders for $[[2d^2, 2, d]]$ toric codes under the phenomenological noise model
		with $d+1$ rounds of syndrome extraction.
	}
	\label{fig:DS}
\end{figure}

	\subsection{Threshold}
The threshold value of a decoder for a code family is the critical physical error rate below which logical error rates decrease exponentially with increasing code distance.
In this subsection, we perform a threshold analysis of topological codes using the UF  
and UIUF decoders with WG.

 To determine the threshold values of the topological codes, we use the finite-size critical scaling ansatz~\cite{WHP03}. The logical error rate $p_L$ is fitted to the form $d^{1/\nu} (\epsilon - \tau)$ across a range of topological codes using low-order polynomials, where $\nu$ is the critical exponent and $\epsilon$ is the physical error rate, and $\tau$ is the threshold. Our goal is to find the optimal values for $\nu$, $\tau$, and a low-order polynomial function $f$ that accurately match the equation $p_L = f(d^{1/\nu} (\epsilon - \tau))$. 
 The accuracy of $\nu$ is $\pm 0.01$, and the accuracy of $\tau$ is $\pm 0.01\%$.

 Figure\,\ref{fig:toric_threshold}(a) illustrates the decoding performance of UIUF with WG for the $[[2d^2, 2, d]]$ toric codes near the threshold of this code family, where each data point is generated using $10^6$ samples. Using the finite-size critical scaling ansatz, a threshold of $15.51\%$ is determined, as shown in Figure\,\ref{fig:toric_threshold}(b).  
 A similar estimation is performed for surface codes in Figures~\ref{fig:toric_threshold}(c) and \ref{fig:toric_threshold}(d).
 The determined thresholds are robust that even  using $10^5$ samples for each data point    generates almost identical values in our simulations.

	\begin{figure}[htbp]
				\centering				
	\begin{subfigure}{0.23\textwidth}
		\includegraphics[height=!,width=0.99\columnwidth, keepaspectratio=true]{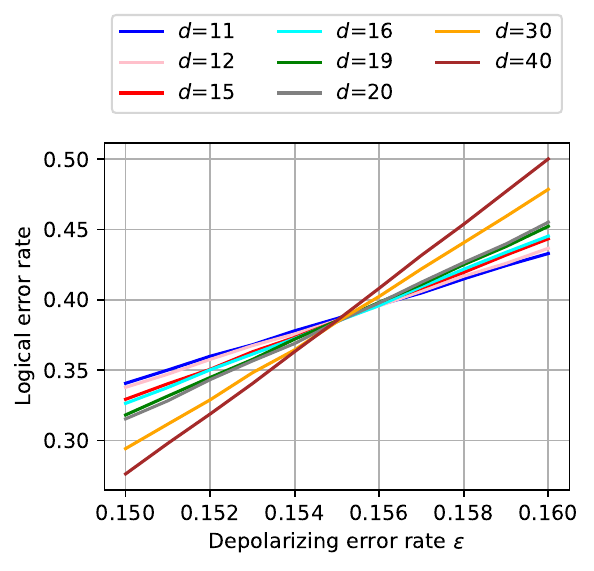} 	
		\caption{Threshold performance of   toric codes.}		
	\end{subfigure}\hspace{0.2cm}
	\begin{subfigure}{0.23\textwidth}
			\includegraphics[height=!,width=0.99\columnwidth, keepaspectratio=true]{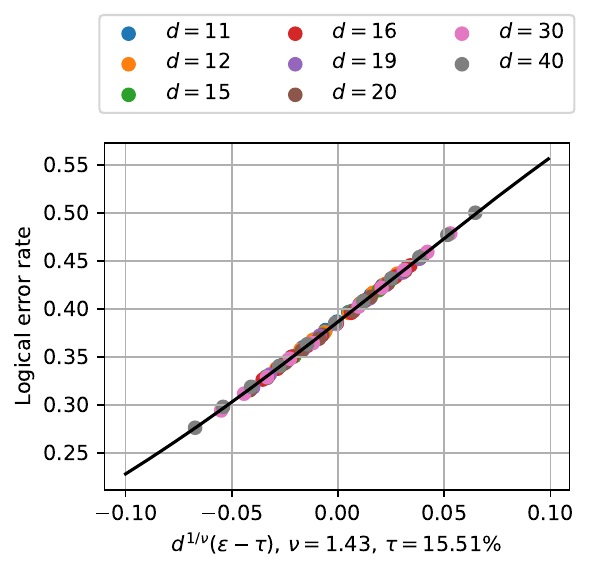} 	
		\caption[Estimated threshold of UIUF decoder of the toric code]{Data fitting using a second-order polynomial.}
	\end{subfigure}\\

	\begin{subfigure}{0.23\textwidth}
	\includegraphics[height=!,width=0.99\columnwidth, keepaspectratio=true]{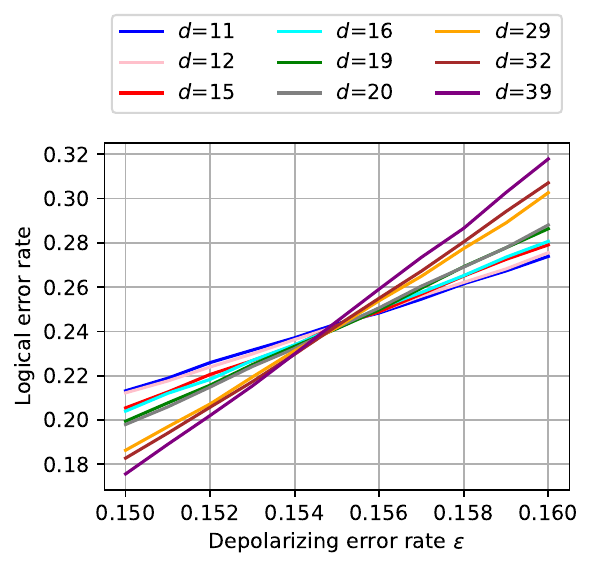} 	
	\caption{Threshold performance of   surface codes.}		
\end{subfigure}\hspace{0.2cm}
\begin{subfigure}{0.23\textwidth}
	\includegraphics[height=!,width=0.99\columnwidth, keepaspectratio=true]{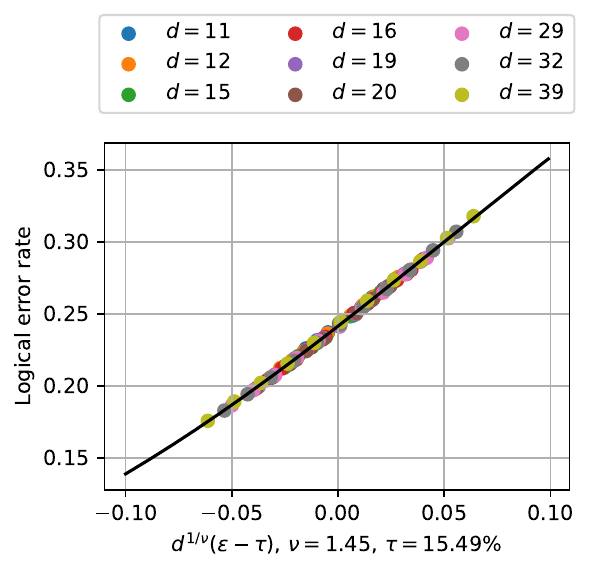} 	
	\caption[Estimated threshold of UIUF decoder of the toric code]{Data fitting using a second-order polynomial.}
\end{subfigure}
	\caption[Threshold of UIUF decoder of the toric code]{
		Estimated thresholds in the code capacity noise model under depolarizing errors using the UIUF decoder with WG: $15.51\%$ for $[[2d^2, 2, d]]$ toric codes and $15.49\%$ for $[[d^2+(d-1)^2, 1, d]]$ surface codes.
	}
	\label{fig:toric_threshold}
 
\end{figure}

\begin{table}[htbp]
	\centering
	\begin{tabular}{|c|c|c|c|c|}
		\hline
		&     UF   &UF  &        UIUF   &  UIUF \\
		&&+ WG&&+ WG\\ \hline
		Toric              &   14.54\%	& 14.93\%  	&     14.72\%   	& 	15.51\% \\ \hline
		Rotated Toric      &   14.44\% 	& 14.87\% 	&    14.72\%    & 	15.52\% \\ \hline
		Surface            &   14.61\%	& 14.92\% 	&    14.98\%  	&	15.49\% \\ \hline
		Rotated Surface    &   14.68\%	& 15.03\% 	&   15.16\% 	&	15.63\%  \\ \hline
	\end{tabular}
	\caption[Table of thresholds of UF, IRUF, and UIUF decoders]{Thresholds of the UF and UIUF decoders with and without WG for the four topological code families.}
	\label{tb:threshold}
\end{table}

The thresholds of the UF and UIUF decoders, with and without WG, for the four topological codes are summarized in Table~\ref{tb:threshold}.
 The threshold of 14.61\% for depolarizing errors for surface codes using the UF decoder is consistent with the value reported in~\cite{MPT22}.

 	In the phenomenological noise model, we estimate an error threshold of $3.55\%$ for the rotated toric codes using the UIUF decoder with WG, as shown in 
 	Figure~\ref{fig:surface_DS_threshold}.  Using the UF with WG decoder instead results in a threshold of $3.45\%$. For toric codes, the threshold values are $3.52\%$ and $3.59\%$ for UF with WG and UIUF with WG, respectively. While these threshold values are similar, Figure~\ref{fig:DS} clearly illustrates the improvement of UIUF over UF.

 	If a lower measurement error rate of $\frac{2\epsilon}{3}$ is simulated, the thresholds for rotated toric codes increase to $4.11\%$ for UIUF with WG and $3.85\%$ for UF with WG.

\begin{figure}[htbp]

		\centering				

\begin{subfigure}{0.23\textwidth}
	\centering
	\includegraphics[height=!,width=0.99\columnwidth, keepaspectratio=true]{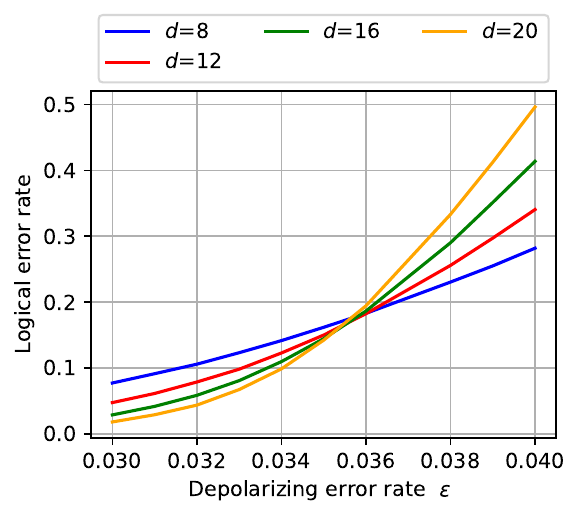} 	
	\caption{Threshold performance for $[[d^2,2,d]]$ rotated toric codes.}		
\end{subfigure}\hspace{0.2cm}
\begin{subfigure}{0.23\textwidth}
	\centering
	\includegraphics[height=!,width=0.99\columnwidth, keepaspectratio=true]{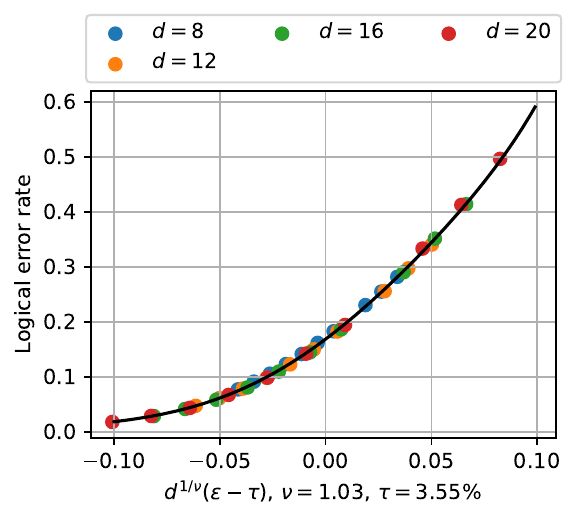} 	
	\caption[Estimated threshold of UIUF decoder of the toric code]{Data fitting using a fourth-order polynomial.}
\end{subfigure}

\caption {
	Estimated threshold in the phenomenological noise model using the UIUF decoder with WG for $[[d^2, 2, d]]$ rotated toric codes.
}
\label{fig:surface_DS_threshold}

\end{figure}

\subsection{Comparison of UIUF and MWPM} 

In this subsection, we compare the UIUF and MWPM decoders. For the MWPM decoder, we use the PyMatching v2 implementation~\cite{HG25} in our simulations. Table~\ref{tb:MWPM_cmp} summarizes the threshold values and time complexity of these decoders.

\begin{table}[htbp]
	\footnotesize
	\begin{center}
		\begin{tabular}{|c| c|c|c|c|}
			\hline
			\multirow{2}{*}{noise model}&\multicolumn{3}{c|}{code capacity} &Phen. \\
			\cline{2-5}
			&               Comp. & toric codes & \multicolumn{2}{c|}{rotated surface codes}\\   
			\hline
			 MWPM& $O(n^3)$ &15.5\%~\cite{WFSH10}  &  15.19\%$^\dagger$&  3.92\% $^\dagger$\\
			  UF+WG   & $O(n)$&  14.93\% &15.03\%&  3.40\%$^\ddagger$ \\
			 UIUF+WG     & $O(n)$&  15.51\% & 15.63\%&3.47\%$^\ddagger$\\
			\hline
		\end{tabular}
	\end{center}
	\caption {Comparison between the MWPM and UIUF decoders in the code capacity and phenomenological noise models.}
	\label{tb:MWPM_cmp}
$^\dagger$: Estimate from PyMatching simulations of logical $X$ error rates.\\
 $^\ddagger$: Estimate using logical $X$ error rates. 
\end{table}

UIUF has a provable worst-case complexity of $O(n)$. While MWPM theoretically has a worst-case complexity of  $O(n^3)$, PyMatching v2 incorporates advanced optimizations that enable near-linear runtime in practical scenarios.

Table~\ref{tb:MWPM_cmp} shows that MWPM and UIUF with WG achieve nearly identical threshold values for toric and surface codes in the code capacity noise model. 
However, for rotated surface codes, UIUF with WG attains a higher threshold of 15.63\%. 
We note that the estimate using logical $X$ error rates is very close to the estimate obtained from overall logical error rates with UF-type decoders.

We remark that the threshold value serves only as a reference for evaluating a decoder's performance. For practical use, decoding performance should be compared, especially in the low-error-rate regime.

 Figure~\ref{fig:MWPM} illustrates   the logical $X$ error rate curves for MWPM and UIUF with WG on rotated surface codes. As seen, MWPM performs better for small distances $3$ and $5$, while UIUF with WG outperforms MWPM for $d \geq 7$, particularly at low-error regime.
 In fact,  UIUF without WG also surpasses MWPM for $d \geq 7$, as can be seen in Figure~\ref{fig:UIUF_WG},  despite having a slightly lower threshold of $15.16\%$ compared to MWPM's $15.19\%$.

Additionally, UF decoding performance can be further improved through preprocessing with neural networks, achieving a threshold of 16.2\% on surface codes~\cite{MPT22}. For comparison, iterative MWPM achieves thresholds of 16.5\%–17\% on (rotated) surface codes~\cite{YL22, iOMFC23}.

	\begin{figure}[htbp]
 		\centering				
				\begin{subfigure}{0.48\textwidth}
					\centering	
						\includegraphics[height=!,width=0.99\columnwidth, keepaspectratio=true]{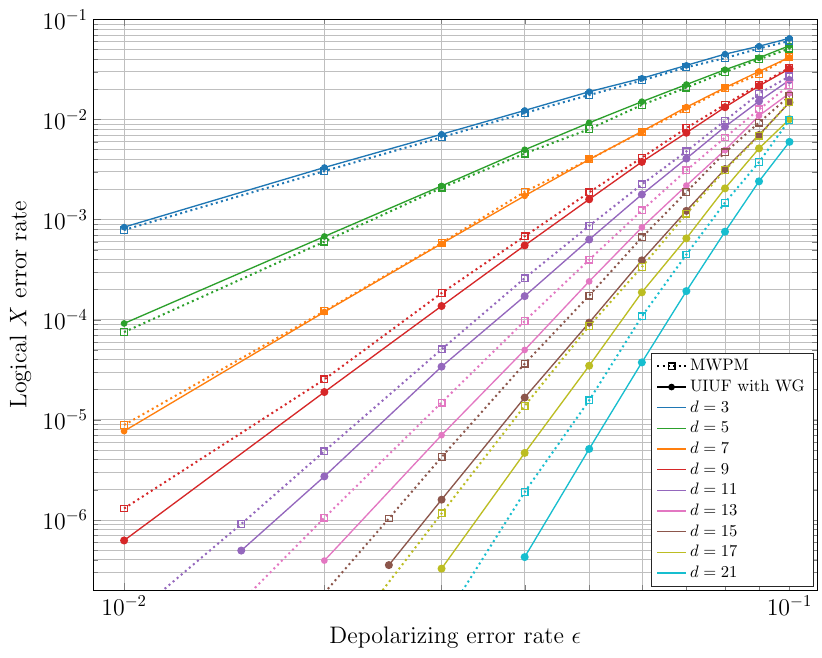} 
					\end{subfigure}
	\caption{
	Comparison of the decoding performance between MWPM and UIUF   for $[[d^2, 1, d]]$ rotated surface codes in the code capacity noise model.
	}
	\label{fig:MWPM}
\end{figure}

In the phenomenological noise model, MWPM exhibits a significantly higher threshold than UIUF. Nonetheless, Figure~\ref{fig:MWPM_DS} compares UF, MWPM, and UIUF, focusing on the logical $X$ error rate. As expected, MWPM outperforms UF for $d \geq 5$. However, UIUF consistently outperforms MWPM in all cases, particularly in the low-error-rate regime. 
For example, the curves for $d=13$ show that while MWPM performs better for $\epsilon$ above 0.03, UIUF with WG demonstrates a clear advantage over MWPM for $\epsilon$ below 0.02.
This advantage arises  from UIUF's ability to explicitly handle more critical correlated errors weight $t+1=\lfloor\frac{d+1}{2}\rfloor$, which are crucial in the low-error-rate regime and become increasingly prevalent over the $d+1$ rounds of syndrome extraction.

	\begin{figure}[htbp]
	\centering				
	\begin{subfigure}{0.48\textwidth}
		\centering	
		\includegraphics[height=!,width=0.99\columnwidth, keepaspectratio=true]{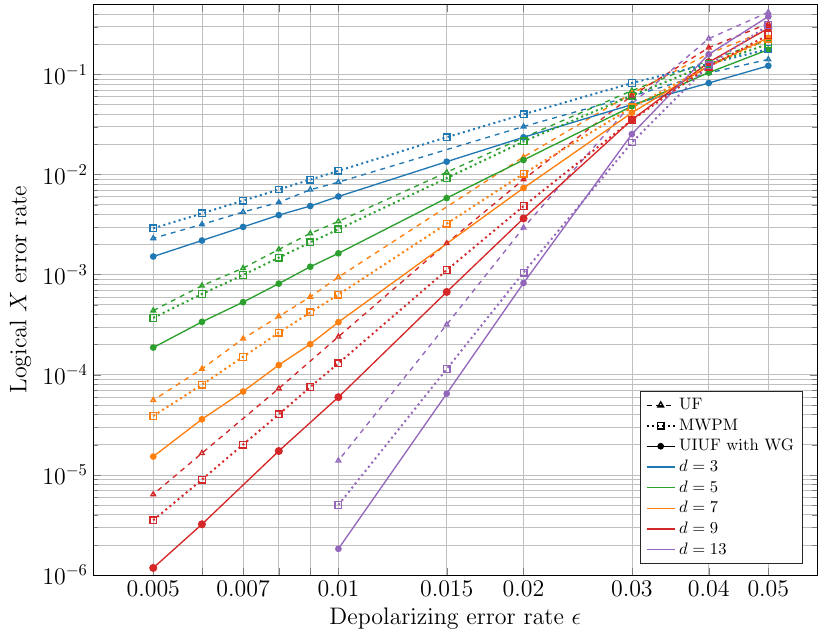} 
	\end{subfigure}
	\caption{
		Comparison of the decoding performance between UF, MWPM, and UIUF   for $[[d^2, 1, d]]$ rotated surface codes in the phenomenological noise model.
	}
	\label{fig:MWPM_DS}
\end{figure}

 	\subsection{Biased noise}
 	
 	While the UIUF decoder is designed for depolarizing errors, we evaluate its performance under varying channel asymmetry in this subsection. 
 	
 	Let $p_X, p_Y, p_Z$ denote the probabilities of single-qubit $X, Y, Z$ errors, respectively. We consider biased noise favoring $Z$ errors, and assume $p_Y = p_X$ for simplicity. The bias ratio is defined as $\eta = \frac{p_Z}{p_X}$, and the total physical error rate is given by $\epsilon = p_X + p_Y + p_Z$.
 	When $\eta = 1$, the noise corresponds to a depolarizing channel. 
 	As $\eta$ increases, the correlation between $X$ and $Z$ errors decreases, leading to a diminishing advantage of UIUF over UF.
 	
 	Figure~\ref{fig:biased} illustrates the performance of the $[[200,2,10]]$ unrotated toric code for various values of $\eta$ at a fixed physical error rate of $\epsilon = 0.05$, using both UF and UIUF decoders with WG. As shown, the performance of UIUF and UF converges as $\eta$ increases, becoming nearly identical for $\eta > 10^3$. 
 	Consequently, UIUF should be used when $\eta < 100$, where it provides a notable improvement. This suggests that UIUF is particularly useful in realistic noise scenarios where errors are not strictly independent but exhibit moderate correlations.

 	\begin{figure}[htbp]
 		\centering
 		\includegraphics[height=!,width=0.45\textwidth, keepaspectratio=true]{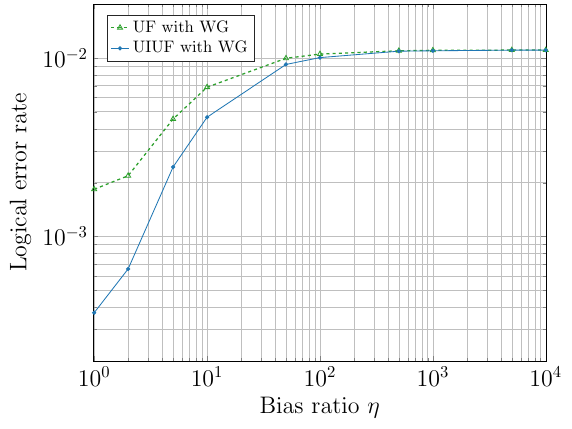} 
 		\caption{Comparison of the UF and UIUF decoders for the $[[200,2,10]]$ unrotated toric code under biased noise with a physical error rate of $\epsilon=0.05$.}
 		\label{fig:biased}
 	\end{figure}

 \subsection{Decoding Time}
 \begin{figure}[htbp]
 	\centering
 	\includegraphics[height=!,width=0.45\textwidth, keepaspectratio=true]{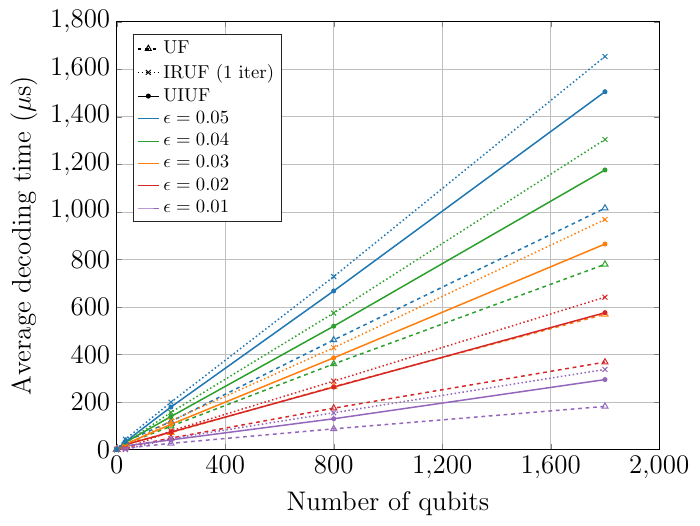} 
 	\caption[Decoding time comparison of the three decoders of the rotated surface code]{Decoding time comparison for the three decoders on the $[[2d^2, 2, d]]$ toric codes with $d = 4, 10, 20, 30$ under the depolarizing noise model. The symbol $\epsilon$ denotes the depolarizing error rate, and the runtime is measured in microseconds.}
 	\label{fig:UIUF_WG0}
 \end{figure}

 The actual average runtime of each decoder on a single-thread computer is shown in Figure \ref{fig:UIUF_WG0}. For each data point, the decoding process was repeated $10^6$ times, and the average decoding time was calculated. We observe that the average runtime of both the IRUF and UIUF decoders scales linearly with the number of qubits, which aligns with the expected time complexity. Specifically, the average runtime of the UIUF decoder is approximately 1.53 times longer than the UF decoder, which is less than twice the time complexity of UF, as anticipated in Algorithm~\ref{alg:UIUF}.

 Furthermore, the UIUF decoder has a shorter average runtime than the IRUF decoder with one iteration, while its performance is comparable to IRUF with more iterations.

 Similarly, we plot the runtime performance of UIUF on rotated surface codes under the phenomenological noise model in Figure~\ref{fig:time_DS}, with $10^6$ trials per data point. In this plot, the $x$-axis represents the number of binary error variables over $d+1$ rounds, given by $d^2(d+1) + (d^2-1)d$. 
The simulated block length is much larger than in the case of perfect syndromes, and thus $10^6$ trials may not be sufficient. Nevertheless, we observe an approximately linear scaling.

  \begin{figure}[htbp]
 	\centering
 	\includegraphics[height=!,width=0.45\textwidth, keepaspectratio=true]{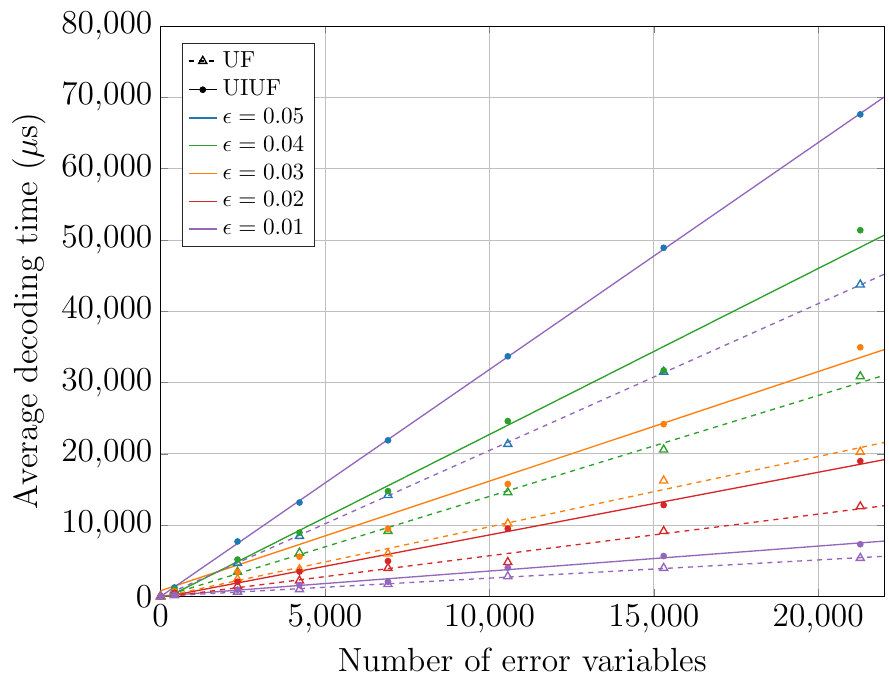} 
 	\caption{Decoding time comparison between the UF and UIUF decoders for $[[d^2, 1, d]]$ rotated surface codes with $d = 5, 9, 11, 13, 15, 17, 19$ under the phenomenological noise model. The symbol $\epsilon$ represents the depolarizing error rate, and the runtime is measured in microseconds.
 		Linear regression lines are drawn for each dataset.
 }
 	\label{fig:time_DS}
 \end{figure}

	\section{Conclusion} \label{sec:con}

In this paper, we have introduced the  UIUF decoder for decoding depolarizing errors in topological codes. By leveraging the correlations between $X$ and $Z$ errors,   UIUF enhances the performance of the standard UF decoder while maintaining its computational efficiency. Unlike iterative decoders such as IRUF, UIUF remains non-iterative with low complexity while ensuring  error correction guarantees.  Our analysis and simulations demonstrate that UIUF outperforms the UF decoder  under both the code capacity and phenomenological noise models, by correcting more low-weight errors through the exploitation of $X$/$Z$ correlations. 
	We have also demonstrated that UIUF outperforms UF under the asymmetric Pauli error model when the bias ratio is sufficiently small.

 Moreover, UIUF outperforms MWPM on rotated surface codes under both the code capacity and phenomenological noise models, making it a promising option for near-term quantum devices.

Future research may focus on optimizing UIUF for more realistic circuit-level errors~\cite{HNB20}. Additionally, exploring hybrid approaches that combine the strengths of UIUF with other advanced decoding techniques~\cite{MPT22} could offer promising avenues for further improving performance.



%

\end{document}